\newcommand{\bydef}{\triangleq}
\def\SNR{{\textsf{SNR}}}
\def\bydef{:=}
\def\bb0{{\mathbb{0}}}
\def\bydef{:=}
\def\ba{{\mathbf{a}}}
\def\bb{{\mathbf{b}}}
\def\bh{{\mathbf{h}}}
\def\bq{{\mathbf{q}}}
\def\bx{{\mathbf{x}}}
\def\by{{\mathbf{y}}}
\def\bz{{\mathbf{z}}}
\def\b0{{\mathbf{0}}}
\def\bA{{\mathbf{A}}}
\def\bD{{\mathbf{D}}}
\def\bH{{\mathbf{H}}}
\def\bI{{\mathbf{I}}}
\def\bP{{\mathbf{P}}}
\def\bQ{{\mathbf{Q}}}
\def\bS{{\mathbf{S}}}
\def\bU{{\mathbf{U}}}
\def\bV{{\mathbf{V}}}
\def\bbC{{\mathbb{C}}}
\def\bbE{{\mathbb{E}}}
\def\bbN{{\mathbb{N}}}
\def\bbR{{\mathbb{R}}}
\def\bydef{:=}
\def\sf0{{\mathsf{0}}}
\begin{document}

\newtheorem{thm}{Theorem}
\newtheorem{lemma}{Lemma}
\newtheorem{rem}{Remark}
\newtheorem{exm}{Example}
\newtheorem{prop}{Proposition}
\newtheorem{defn}{Definition}
\newtheorem{cor}{Corollary}
\def\proof{\noindent\hspace{0em}{\itshape Proof: }}
\def\endproof{\hspace*{\fill}~\QED\par\endtrivlist\unskip}
\def\bh{{\mathbf{h}}}
\def\SNR{{\mathsf{SNR}}}
\def\SIR{{\mathsf{SIR}}}
\title{Transmission Capacity of Ad-hoc Networks with Multiple Antennas using Transmit Stream Adaptation and Interference Cancelation}
\author{
Rahul~Vaze and Robert W. Heath Jr.
\thanks{Rahul~Vaze is with the School of Technology and Computer Science, Tata Institute of Fundamental Research, Homi Bhabha Road, Mumbai 400005, vaze@tcs.tifr.res.in.  \newline Robert W. Heath Jr. is with the Wireless Networking and Communications Group, Department of Electrical and Computer Engineering, The University of Texas at Austin, 1 University Station, C0803, Austin, TX 78712-0240, rheath@ece.utexas.edu. \newline This work was funded by DARPA through IT-MANET grant no. W911NF-07-1-0028.}}

\date{}
\maketitle
\noindent
\begin{abstract}
The transmission capacity of an ad-hoc network is the maximum density of active
 transmitters per unit area, given an outage constraint at each receiver for
a fixed rate of transmission. Assuming that the transmitter locations are distributed as a Poisson point process, this paper derives upper and lower bounds on the transmission
capacity of an ad-hoc network when each node is equipped with multiple
antennas. The transmitter either uses eigen multi-mode
beamforming or a subset of its antennas to transmit multiple data streams,
while the receiver uses partial zero forcing to cancel certain
interferers using some of its spatial receive degrees of freedom (SRDOF). The receiver either cancels the nearest interferers or those interferers that maximize the post-cancelation signal-to-interference ratio.
Using the obtained bounds, the optimal number of data streams to transmit,
and the optimal SRDOF to use for interference cancelation are derived that provide the best scaling of the transmission capacity with the number of antennas.
With beamforming, single data stream transmission together with using all but one SRDOF for interference cancelation  is  optimal, while without beamforming, single data stream transmission
together with using a fraction of the total SRDOF for  interference cancelation is  optimal. \end{abstract}

\section{Introduction}
In an ad-hoc wireless network, multiple transmitter-receiver pairs
communicate simultaneously  without the help of
any fixed infrastructure. 
Inter-user interference is a major bottleneck in an ad-hoc wireless network,
severely limiting the rate of successful transmissions.
One way to quantify the performance in an ad-hoc network is through the notion of the transmission capacity. Defined in \cite{Weber2005}, and subsequently studied in \cite{Weber2007, Weber2008, Baccelli2006, Haenggi2009}, the transmission capacity of an ad hoc network is 
 the maximum allowable transmission density of nodes, satisfying a per transmitter receiver rate, and outage probability constraints.
The transmission capacity characterizes
the maximum density of spatial transmissions that can be
 supported  simultaneously in an ad hoc network under a quality
of service constraint.

Employing multiple antennas at each node is one way to manage interference and increase data rate in an ad-hoc wireless network.  For example, multiple antennas can be used to increase  the per-link rate through spatial multiplexing, or to increase spatial diversity for  reducing fading outages, or for receiver interference cancelation to remove strong interferers. The transmission capacity of some specific multiple antenna strategies like beamforming, maximum ratio combining (MRC), spatial multiplexing, and zero forcing  have been derived in prior work \cite{Hunter2008, Huang2008, Jindal2008a, Mackay2009, Proakis2007,Ali2010}. The general problem of finding the optimal use of multiple transmit 
and receive antennas to maximize the transmission capacity has however remained unsolved.

To characterize the optimal use of multiple antennas in ad hoc networks, in this paper we derive upper and lower bounds on the transmission capacity 
in multiple antenna ad hoc networks, with multi-stream transmission and interference cancelation at the receiver. 
We assume that the transmitter locations are distributed as a Poisson point process (PPP), and each node of the ad-hoc network is equipped with $N$ antennas for transmission and reception. 

We consider two transmission strategies: multi-mode spatial multiplexing without channel state information at the transmitter (CSIT) \cite{Foschini1998}, and multi-mode beamforming with CSIT \cite{Telatar1999}. 
We assume that each receiver uses partial zero forcing (ZF), where some of the spatial receive degrees of freedom (SRDOF) are used for decoding the signal of interest leaving the remaining SRDOF for interference cancelation. 
We derive results when each receiver cancels the nearest interferers in terms of their distance from the receiver, or cancels the interferers that maximize the post-cancelation signal-to-interference ratio (SIR).
 Our results are summarized as follows. 
\begin{itemize}
\item Spatial Multiplexing  (without CSIT)
\begin{itemize}
\item {\bf Canceling the nearest interferers in terms of their distance from the receiver, or the interferers that maximize the post-cancelation SIR}: Transmitting a single data stream together with using a fraction of total  SRDOF
for interference cancelation provides the best scaling of the
transmission capacity with respect to $N$; the transmission capacity lower bound scales linearly with $N$. \footnote{The results for canceling the nearest interferers without transmit beamforming have appeared in \cite{VazeAsilomar2009} in part.}
\end{itemize}
\item With Transmit Beamforming (with CSIT)
\begin{itemize}
\item
{\bf Canceling the nearest interferers in terms of their distance from the receiver}: Single stream beamforming together with using $N-1$ SRDOF for
interference cancelation  provides the best scaling of the
transmission capacity with respect to $N$; the
transmission capacity lower bound scales linearly with $N$. \footnote{The results for canceling the nearest interferers  with transmit beamforming have appeared in \cite{VazeSPAWC2009} in part.}

\end{itemize}

\end{itemize}

The differences between our paper and prior work are summarized as follows.  Without CSIT, the transmission capacity has been analyzed for single transmit antenna with no interference cancelation \cite{Hunter2008}, multiple transmit antennas with no interference cancelation \cite{Mackay2009,Proakis2007},  
single transmit antenna with canceling $N-1$ interferers \cite{Huang2008}, and single transmit antenna and using a fraction of total  SRDOF for interference cancelation  \cite{Jindal2008a}. 
In this paper we consider multi-stream transmission, unlike \cite{Hunter2008,Huang2008,Jindal2008a},
and canceling a fraction of the received interferers, generalizing  \cite{Hunter2008,Huang2008,Jindal2008a,Mackay2009,Proakis2007}. 
Without CSIT, and when receiver employs interference cancelation, we show that it is optimal to transmit a
single data stream and use a fraction of the total  SRDOF for interference cancelation. Our work shows that the strategy proposed in \cite{Jindal2008a} is transmission-capacity scaling optimal in terms of the number of antennas.


With CSIT, the transmission capacity has been computed for single stream beamforming without interference cancelation in \cite{Hunter2008}.  We generalize  \cite{Hunter2008} by considering interference cancelation together with multi-mode beamforming, where multiple data
streams are sent by the transmitter on multiple eigenmodes of the channel. Our results show that using interference cancelation at the receiver 
in conjunction with single stream beamforming, 
the transmission capacity scales linearly with $N$ in contrast to sublinear
scaling without interference cancelation \cite{Hunter2008}.

{\it Notation:} Let ${\bA}$ denote a matrix, ${\bf a}$ a vector and
$a(i)$ the $i^{th}$ element of ${\bf a}$.   The field of real
and complex numbers is denoted by $\bbR$ and $\bbC$, respectively.
The space of $M\times N$ matrices with complex entries is denoted by
${\bbC}^{M\times N}$. An $N \times N$ identity matrix is represented by $\bI_N$. The Euclidean norm of a vector $\bf a$ is
denoted by $|\ba|$.  The superscripts $^T, ^*$ represent the
transpose, and the  transpose conjugate, respectively. The
expectation of a function $f(x)$ of random variable  $x$ is denoted by
${\bbE}\{f(x)\}$. The integral $\int_{0}^{\infty}x^{k-1}e^{-x}dx$ is
denoted by $\Gamma(x)$.
A circularly symmetric complex Gaussian random
variable $x$ with zero mean and variance $\sigma^2$ is denoted as $x
\sim {\cal CN}(0,\sigma^2)$. The factorial of an integer $n$ is denoted
as $n!$. Let $S_1$ be a set, and $S_2$ be a subset of $S_1$. Then
$S_1 \backslash S_2$ denotes the set of elements of $S_1$ that do not belong to $S_2$. The cardinality of any set $S$ is denoted by $|S|$. Let $f(n)$ and $g(n)$ be two function defined on some subset of real numbers.
Then we write $f(n) = \Omega(g(n))$ if
$\exists \ k > 0, \ n_0, \ \forall \ n>n_0$, $|g(n)| k \le |f(n)|$,
$f(n) = {\cal O}(g(n))$ if $\exists \ k > 0, \ n_0, \ \forall \ n>n_0$, $|f(n)| \le |g(n)| k$, and
$f(n) = \Theta(g(n))$ if $\exists \ k_1, \ k_2 > 0, \ n_0, \ \forall \ n>n_0$,
$|g(n)| k_1 \le |f(n)| \le |g(n)| k_2$. We use the symbol
$\bydef$ to define a variable.

{\it Organization:} The rest of the paper is organized as follows.
In Section \ref{sec:sys}, we describe the system model under consideration. 
In Section \ref{sec:csir}, upper and lower bounds on the transmission capacity are derived for the case when the transmitter sends multiple independent data using spatial multiplexing  and the receiver uses partial ZF decoder. In Section \ref{sec:csit}, upper and lower bounds on the transmission capacity  are derived for the case when the transmitter uses
multi-mode beamforming, and the receiver uses partial ZF decoder. Numerical results are illustrated in 
Section \ref{sec:sim} followed by conclusions  in Section \ref{sec:conc}.

\section{System Model}
\label{sec:sys} Consider an ad-hoc network where
each node is equipped with $N$ antennas for transmission and reception. We adopt the assumptions considered in previous transmission capacity analysis of ad-hoc
networks \cite{Weber2005, Weber2007, Huang2008}. The location of each
source is modeled as a homogenous PPP on a two-dimensional plane with intensity $\lambda_0$. Thus, the mean number of sources in an unit area is $\lambda_0$.
Each source node  communicates with one destination  located at a fixed distance $d$ away.
We consider a slotted ALOHA like random access protocol, where each source  attempts to transmit with an access probability $p_a$, independently of all other transmitters. \footnote{Other more intelligent MAC strategies such as not allowing interferers inside a guard zone to transmit have been considered in prior work for computing the transmission capacity \cite{Hasan2007}. Their analysis, however,  is quite complicated and is outside of the scope of this paper.} 
An active source is referred to as a transmitter, and a destination associated with a transmitter  is referred to as a receiver. Consequently, the transmitter process is also a homogenous
PPP on a two-dimensional plane with intensity
$\lambda = p_a\lambda_0$. Let the location of the $n^{th}$ 
transmitter be $T_n$ for $n\in \bbN$. The set of all 
transmitters is denoted by $\Phi = \left\{T_n\right\}$. 
Following  \cite{Weber2005}, we consider a typical
transmitter receiver pair $(T_0, R_0)$ to compute the transmission capacity,  since from the stationarity of the homogenous PPP, and  Slivnyak's Theorem (Page $121$) \cite{Stoyan1995}, it follows that the statistics of the signal received at the typical receiver are identical to that of any other receiver.



%
\subsection{Signal Model}
Let $\bx_n = [x_{n}(1), \dots, x_{n}(k)]^T, \ k \in \left[1,2,\ldots, N\right]$, denote the  data stream vector
to be sent by transmitter $T_n$ to its receiver $R_n$, where each $x_{n}(m), \ m=1,\dots,k$ is i.i.d. ${\cal CN}(0,P/k)$ distributed, and $P$ is the average power constraint at the each transmitter. 
Let  $\bH_{0n}\in \bbC^{N\times k}$
be the channel coefficient matrix between $T_n$ and $R_0$ with  i.i.d. ${\cal CN}(0,1)$ distributed entries, and 
$d_n$ be the distance between $T_n$ and $R_0$. 
We assume that  the path loss exponent $\alpha > 2$.

We consider two transmission strategies. Under a no CSIT assumption, we consider spatial multiplexing where transmitter $T_n$ sends 
vector $\bP_n \bx_n$, where $\bP_n=\bI_N^k$, where  $\bA^{k}$ denotes  the matrix consisting of first $k$ columns of
any matrix $\bA$. Under a CSIT assumption, we suppose that the transmitter $T_n$ knows the channel $\bH_{nn}$ between itself and its corresponding receiver $R_n$, and sends $\bP_{n}\bx_n$, where $\bP_{nn} = \bV_{nn}^k$, and  
the singular value decomposition of $\bH_{nn} \bydef \bU_{nn}\bD_{nn}\bV_{nn}^{*}$. 
We use the path loss model of $D^{-\alpha}$, if the Euclidean distance between any two nodes is $D$.


We describe the signal model in detail for the typical transmitter-receiver pair $(T_0, R_0)$. The received signal at the typical receiver $R_0$ is
\begin{equation}
\label{eqcsirrxmodel}
\by_0= d^{-\frac{\alpha}{2}}\bH_{00}\bP_0\bx_0 + \sum_{n: T_n \in \Phi \backslash \{T_0\}}^{\infty}d_n^{-\frac{\alpha}{2}}\bH_{0n}\bP_n\bx_n + \bz_0,
\end{equation}
where $\bz_0$ is the additive white Gaussian noise. We consider the interference limited regime, i.e.
the noise power is negligible compared to the interference power, and
henceforth ignore the noise contribution.
We assume that each entry of $\bH_{0n}$, and $\bH_{nn}$ is
independent and identically distributed (i.i.d.) ${\cal CN}(0,1)$ to model a richly
scattered fading channel with independent fading coefficients between different
transmitting receiving antennas similar to \cite{Hunter2008,Huang2008,Jindal2008a}.

\begin{figure}
\centering
\includegraphics[width=4in]{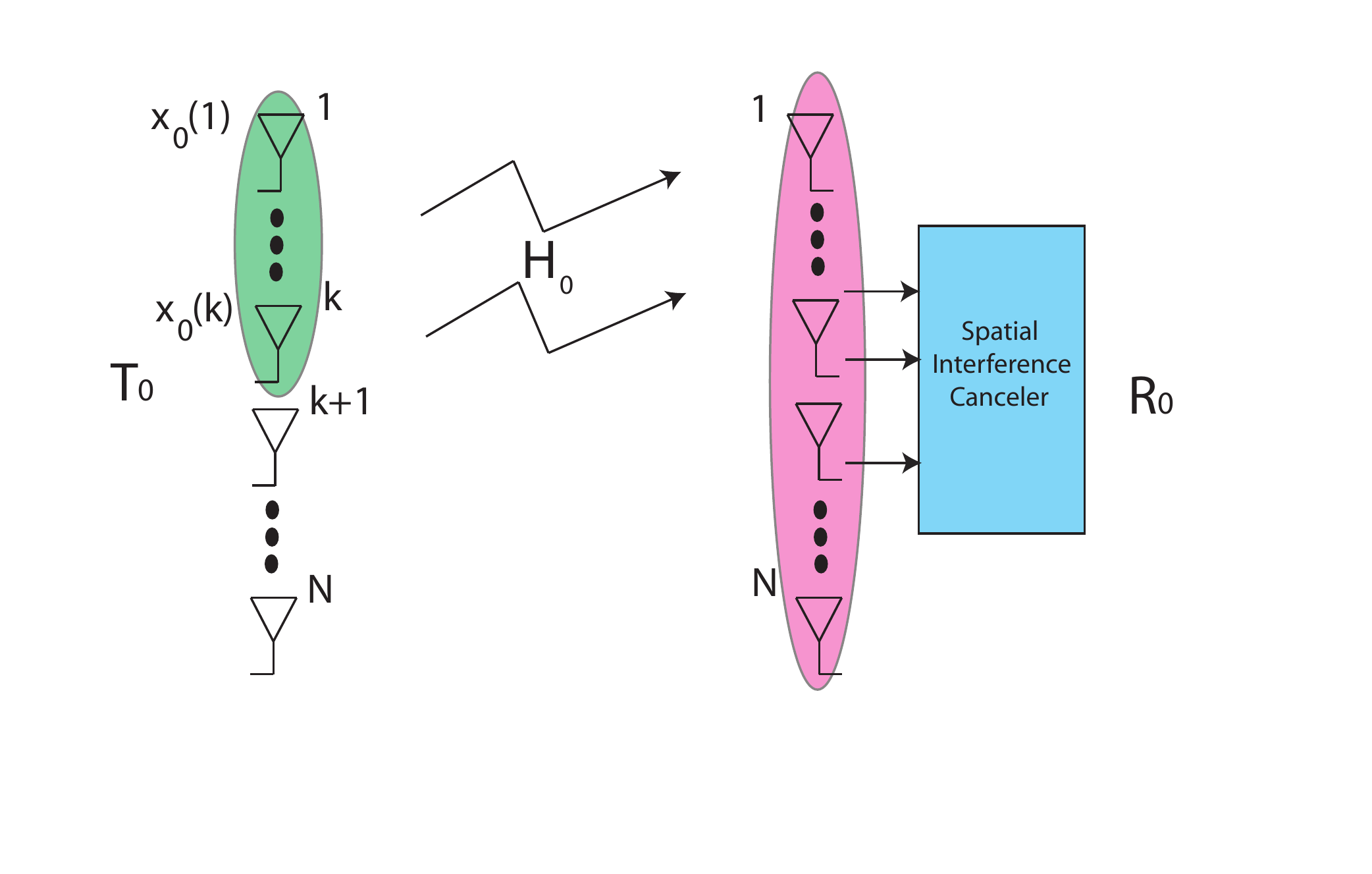}
\caption{System block diagram with no CSI at the transmitter and spatial interference cancelation at the receiver}
\label{fig:blknocsit}
\end{figure}

The decoding strategy used at each receiver depends on the transmit signaling assumption.  
Without CSIT,  where $\bP_n = \bI_N^k$, we assume that the receiver performs single stream decoding using partial ZF \cite{Jindal2008a}, where the receiver $R_0$ uses 
$N-m$ SRDOF for decoding any data stream $x_{0}(j), \ j=1,2,\dots,k$, leaving the remaining $m$ SRDOF for
canceling  the $c(k,m) \bydef \left\lfloor\frac{m}{k}\right\rfloor$ interferers.\footnote{If $c(k,m) < \frac{m}{k}$ due to the floor, then the receiver uses
use $k \times c(k,m) < m$ SRDOF for cancelation, leaving $N - k c(k,m) > N-m$ for decoding the signal of interest.} 
Partial ZF allows to keep the analysis tractable while incurring low decoding complexity. 
With partial ZF, to decode data stream $x_0(j)$ 
at $R_n$, all the other data streams $x_{0}(1), \dots \ x_{0}(j-1), \ x_{0}(j+1), \dots \ x_{0}(k)$ sent from transmitter $T_0$ also appear as interference. Therefore effectively only $N-m-k+1$ SRDOF are used to decode any data stream $x_0(j)$.  The transmit receive strategy without CSIT is depicted in Fig. \ref{fig:blknocsit}. 

For analytical purposes, we assume that the number of canceled interferers $c(k,m) > \frac{\alpha}{2}-1$. Since the typical range of the path-loss exponent $\alpha$ is between $2$ and $4$, $c(k,m) > \frac{\alpha}{2}-1$ implies that at least one interferer should be canceled. Thus our analysis precludes the case of no interference cancelation, 
which has already  been studied in \cite{McKayTC2011, Hunter2008}. Moreover, since we are interested in finding the optimal scaling of the transmission capacity with the number of antennas $N$ at each node, the constraint 
$c(k,m) > \frac{\alpha}{2}-1$ is not that restrictive, since for large values of $N$, there is sufficient flexibility for choosing optimal $k, m$, with $c(k,m) > \frac{\alpha}{2}-1$.

Let ${\cal S}_0 \subset \Phi\backslash \{T_n\}$ be the  subset of interferers to be canceled at $R_0$ with $|{\cal S}_0| = c(k,m)$. 
With partial ZF, matrix $\bQ_0 = [\bq_{1}(1) \bq_{0}(2) \dots \bq_{0}(k) ]^T$ is multiplied to the received signal $\by_0$, where 
$\bq_{0}(\ell) \in \bbC^{N \times 1}$ lies in the null space ${\cal N} ( {\cal H}_{0\ell} )$ of the matrix 
\[{\cal H}_{0\ell} \bydef \left[\bH_{00}(1) \ldots \bH_{00}(\ell-1)\ \bH_{00}(\ell+1) \ldots \bH_{00}(k) \ \bH_{0{\cal S}_0(1)} \ \bH_{0{\cal S}_0(2)} \ldots \bH_{0{\cal S}_0(c(k,m))}\right],\] where $\bH_{00}(p)$ represents the $p^{th}$ column of 
$\bH_{00}$, and ${\cal S}_0(j)$ is the $j^{th}$ element of ${\cal S}_0$.
Multiplying $\bQ_0$ to $\by_0$, we get ${\hat \by_0} \bydef \bQ_0 \by_0$, where the 
$\ell^{th}$ element of ${\hat \by_0}$ is 
\begin{eqnarray}
{\hat \by_0}(\ell) =  d^{-\frac{\alpha}{2}}\bq_{0\ell}^T \bH_{00} x_0(\ell) + 
\sum_{n:T_n \in \Phi\backslash\{{\cal S}_0, \{T_0\}\}}d_n^{-\frac{\alpha}{2}}\sum_{j=1}^k\bq_{\ell}^T\bH_{0n}(j)x_{n}(j).
\end{eqnarray}
Therefore without CSIT, the SIR for the $\ell^{th}$ data stream from $T_0$ while canceling the interferers belonging to ${\cal S}_0$ is 
\begin{equation}\label{eq:sircsir}\SIR_{{\cal S}_0, \ell} \bydef  \frac{d^{-\alpha} |\bq_{0\ell}^T\bH_{00}(\ell)|^2}{\sum_{n:T_n\in \Phi \backslash \{\{T_0\}, \cal S\}} d_n^{-\alpha}\sum_{j=1}^k|\bq^{T}_{0\ell}\bH_{0n}(j)|^2}.
\end{equation}


\begin{figure}
\centering
\includegraphics[width=4in]{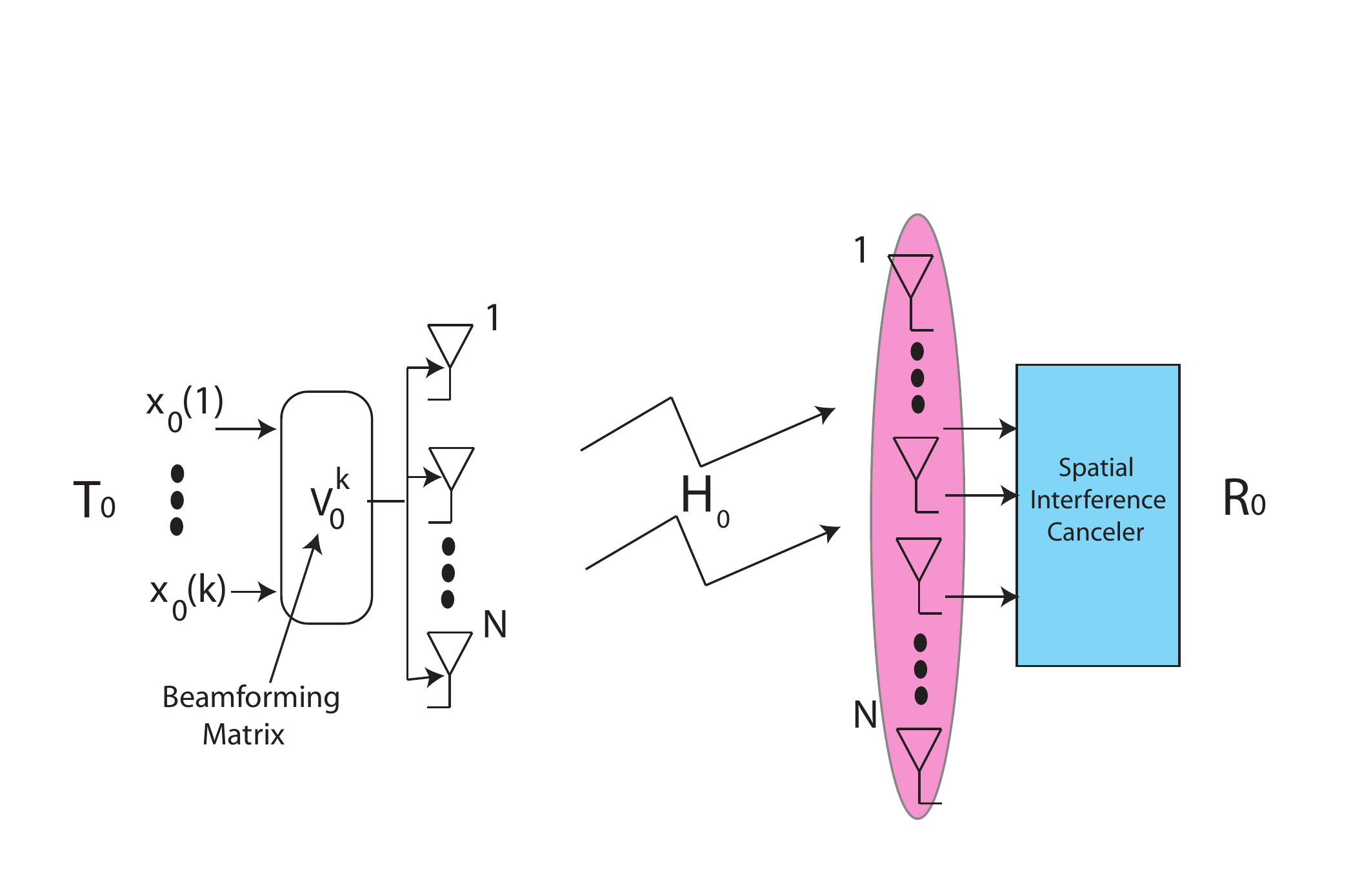}
\caption{System block diagram with transmit beamforming and interference cancelation at the receiver}
\label{fig:blkcsit}
\end{figure}

With CSIT, recall that we consider multi-mode beamforming for transmission $\bP_n = \bV_{nn}^k$, where $\bV_{nn}$ is the matrix of the right singular vectors of matrix $\bH_{nn}$. 
With CSIT, we assume that each receiver uses $k$ SRDOF to receive the intended signal, while the remainder 
of $N-k$ SRDOF are used to cancel the $c(k) \bydef \left\lfloor\frac{N}{k}\right\rfloor -1$  interferers. Similar to the case without CSIT, for the purposes of analysis we assume that $c(k) > \frac{\alpha}{2}-1$.
A block diagram depicting the
transmit-receive strategy with CSIT is illustrated in Fig. \ref{fig:blkcsit}.
Let ${\cal S}_0 \subset \Phi \backslash \{T_n\}$ be the subset of interferers to be canceled at $R_0$ with $|{\cal S}_0| = c(k)$. 
Let $\bS_0$ be the basis of the null space of the $c(k)$  interferers of ${\cal S}_0$ to be
canceled at $R_0$. Since $N-k$
SRDOF are used for interference cancelation, $\bS_n \in \bbC^{k\times N}$.
For the typical receiver $R_0$, multiplying $\bS_0$ to the received signal,
\begin{eqnarray*}
\bS_0\by_0 &=&d^{-\frac{\alpha}{2}}\bS_0 \bU_{00}\bD_{00}\bV_{00}^{*} \bV_{00}^k \ \bx_0 + \sum_{n:\ T_n \in \Phi \backslash \{\{T_0\}, {\cal S}_0\}} d_n^{-\frac{\alpha}{2}}\bS_0 \bH_{0n} \bV_{nn}^k \bx_n, \\
&= &d^{-\frac{\alpha}{2}}\bS_0 \bU^{k}_{00}\bD_{00}^{k}\bx_0 + \sum_{n: \ T_n \in \Phi \backslash \{\{T_0\}, {\cal S}_0\}} 
d_n^{-\frac{\alpha}{2}}\bS_0 \bH_{0n} \bV_{nn}^k \bx_n,
\end{eqnarray*}
where $\bU^{k}_{00}$ is the $N\times k$ matrix consisting of the first $k$ columns of $\bU_{00}$, and
$\bD_{00}^{k}\in \bbC^{k\times k}$ is the diagonal matrix consisting of
the first $k$ rows and $k$ columns of $\bD_{00}$.
Since $\bS$ and $\bU^{k}_{00}$ are both of rank $k$, and are
independent  with each entry drawn from a continuous
distribution, $\bS\bU^{k}_0$ is full rank with probability $1$.
Multiplying $\left(\bS\bU^{k}_{00}\right)^{-1}$ to the
received signal,
\begin{eqnarray*}
{\hat \by}_0&= &d^{-\frac{\alpha}{2}}\bD_{00}^{k}\bx_0 + \sum_{n:\ T_n \in \Phi \backslash \{T_0, {\cal S}\}} 
d_n^{-\frac{\alpha}{2}}\left(\bS\bU^{k}_0\right)^{-1}\bS \bH_{0n} \bV_{nn}^k \bx_n.
\end{eqnarray*}

Note that $\bD_{00}$ is the diagonal matrix of the singular values of
$\bH_{00}$.
Denoting the $\ell^{th}$ eigenvalue of $\bH_{00}\bH_{00}^{*}$ by $\gamma_{\ell}$, with $\gamma_1 \ge \gamma_2 \ge \dots$, the received signal can be separated in terms of $\bx_0(\ell), \ \ell =1,2,\ldots,k$ as
\begin{eqnarray}\label{rxsigortho}
{\hat y}_0(\ell)&= &d^{-\frac{\alpha}{2}}\sqrt{\gamma_{\ell}}x_0(\ell) + \sum_{n:\ T_n \in \Phi \backslash \{T_0, {\cal S}\}}
d_n^{-\frac{\alpha}{2}}\sum_{j=1}^k c^{n}_{\ell j} x_n(j), \ \ell =1,2,\ldots,k,
\end{eqnarray}
 where  $c^{n}_{\ell j}$ is the $(\ell, j)^{th}$ element of
$\left(\bS\bU^{k}_0\right)^{-1}\bS \bH_{0n} \bV_{nn}^k$. Note that ${\hat y}_0(\ell)$ has no contribution from $x_0(1), \ldots, x_{0}(\ell-1), x_{0}(\ell+1), \ldots, x_{0}(k),\  \ell =1,2,\ldots,k$.  Thus, multi-mode beamforming removes the intra-stream interference, and provides $k$ parallel channels between each transmitter and receiver.
Let $\mu_n^{\ell} \bydef \sum_{j=1}^k |c^{n}_{\ell j}|^2$.
Therefore with CSIT, using multi-mode beamforming,  the SIR for the $\ell^{th}$ data stream from $T_0$ while canceling interferers belonging to ${\cal S}_0$ is 
\begin{equation}\label{eq:sircsit}
\SIR^{BF}_{{\cal S}_0, \ell} \bydef  \frac{d^{-\alpha} \gamma_{\ell}}{\sum_{n:T_n\in \Phi \backslash \{\{T_0\}, {\cal S}_0\}} d_n^{-\alpha} \mu_n }.
\end{equation}


\subsection{Interference Cancelation Algorithms}
We consider two different choices of selecting ${\cal S}_0$, the subset of interferers to cancel, that are described as follows.
\begin{enumerate}
\item  Canceling the $c(k,m)$ (without CSIT) or $c(k)$ (with CSIT) nearest  interferers in terms of distance from the receiver. Let the indices of the interferers be
sorted in an increasing order in terms of their distance $d_n$ from $R_0$, i.e.
$d_1 \le d_2 \le \ldots \le d_{c(k,m)} \le d_{c(k,m)+1} \le \ldots$. Then the subset of interferers to cancel is ${\cal S}_0 = \{1,2,\ldots, c(k,m)\}$ (without CSIT) or ${\cal S}_0 = \{1,2,\ldots, c(k)\}$ (with CSIT).

Without CSIT, while canceling the $c(k,m)$ nearest interferers, the optimal $\bq_{\ell}^T$ that maximizes the signal power  $s\bydef |\bq_{\ell}^T\bH_{00}(\ell)|^2$ is $ \bq_{\ell}^T = \frac{\bH_{00}(\ell)^*\bS\bS^*}{|\bH_{00}(\ell)^*\bS\bS^*|}$ \cite{Jindal2008a}, where  $\bS$ is the basis of the null space of the canceled interferers. Moreover, as shown in \cite{Jindal2008a}, $s$ is Chi-square distributed with $2(N-k-m+1)$ DOF when each channel coefficient is Rayleigh distributed. Note  that since $\bq^T_{\ell}$ is independent of $ \bH_{0n}$ for $n \notin {\cal S}_0$, 
 $\rho_n^{\ell} \bydef \sum_{j=1}^k|\bq^T_{\ell}\bH_{0n}(j)|^2$
is Chi-square distributed with $2k$ DOF.
Thus, without CSIT, from (\ref{eq:sircsir}), the SIR for the
$\ell^{th}$ stream is given by
\begin{equation}\label{eq:sircsirnc}
\SIR_{{\cal S}_0, \ell} = \frac{d^{-\alpha} s }
{\sum_{n:T_n\in \Phi \backslash \{\{T_0\}, {\cal S}_0\}} d_n^{-\alpha} \rho_n}.
\end{equation}

While with CSIT, from (\ref{eq:sircsit}), the SIR for the $\ell^{th}$ data stream  is 
 \begin{equation}\label{eq:sircsitnc}
 \SIR^{BF}_{{\cal S}_0, \ell} \bydef  \frac{d^{-\alpha} \gamma_{\ell}}{\sum_{n:T_n\in \Phi \backslash \{\{T_0\}, {\cal S}_0\}} d_n^{-\alpha} \mu_n },
 \end{equation}
 where $\mu_n^{\ell} = \sum_{j=1}^k |c^{n}_{\ell j}|^2 $, and  ${\cal S}_0$ is the set of $c(k)$  nearest interferers.
Since $\bS_0$, $\bU_{00}^k$ and $\bV_{nn}$ are
independent of $\bH_{0n}, \ n \notin \{{\cal S}_0, \{0\}\}$, and each channel coefficient is Rayleigh distributed, using the definition of $\mu_n^{\ell}$, it follows from \cite{Jindal2008a} that $\mu_n^{\ell}$ is a Chi-square distributed random variable with $2k$ DOF $\forall \ n, \ \ell$. 

\item  Without CSIT,  from (\ref{eq:sircsir}), with partial ZF and canceling interferers from the set ${\cal S}_0$, the SIR for  data stream $\ell$ at $R_0$ is 
\[ \SIR_{{\cal S}_0, \ell} \bydef \frac{d_n^{-\alpha} |\bq_{\ell}^T\bH_{00}(\ell)|^2}{\sum_{n:T_n\in \Phi \backslash \{\{T_0\}, {\cal S}_0\}} d_n^{-\alpha}\sum_{j=1}^k|\bq^{T}_{\ell}\bH_{0n}(j)|^2}.\] 
To maximize $\SIR_{{\cal S}_0, \ell}$,  the optimal set of interferers to cancel ${\cal S}_0$, and the optimal $\bq_{\ell}$ is given by the solution of  the following optimization problem:
\begin{equation}\label{eq:sirmax} \arg \max_{\bq_{\ell}, \ {\cal S}_0, \ {\cal S}_0 \subset \Phi, \  |{\cal S}_0| = c(k,m)} \SIR_{{\cal S}_0, \ell}.
\end{equation}

The transmission capacity analysis with the optimal interference cancelation algorithm is hard, and we are not able to find closed form 
analytical transmission capacity with the optimal ordering. See Remark \ref{restrictionSIR} for more discussion. 
For analytical tractability,  we add an additional constraint that if ${\cal S}_0$ is the set of interferers chosen for cancelation, and $\bS_0$ is the orthonormal basis of the null space of ${\cal S}_0$, then 
$\bq_{\ell}^T =  \frac{\bH_{00}(\ell)^*\bS_0\bS_0^*}{|\bH_{00}(\ell)^*\bS_0\bS_0^*|}$. This choice of $\bq_{\ell}$ is 
motivated by the fact that it maximizes  the signal power $|\bq_{\ell}^T\bH_{00}(\ell)|^2$ when $\bq_{\ell} \in {\cal N}({\cal S}_0)$ \cite{Jindal2008a}.  
With this extra constraint, $S_{\ell}^{\star}$ is the optimal set of interferers to cancel where 
\begin{equation}\label{eq:msir}
{\cal S}_{0 \ell}^{\star} = \arg \max_{{\cal S}_0, \ {\cal S}_0 \subset \Phi, \  |{\cal S}_0| = c(k,m), \  \bq_{\ell}^T =  \frac{\bH_{00}(\ell)^*\bS_0\bS_0^*}{|\bH_{00}(\ell)^*\bS_0\bS_0^*|}} \SIR_{{\cal S}_0, \ell} . 
\end{equation} 
We call this the constrained maximum SIR (CMSIR) algorithm. 
Note that similar to the case of canceling the nearest interferers, $\bq_{\ell}^T =  \frac{\bH_{00}(\ell)^*\bS_0\bS_0^*}{|\bH_{00}(\ell)^*\bS_0\bS_0^*|}$ with the CMSIR algorithm as well, and hence the signal power $s= |\bq_{\ell}^T\bH_{00}(\ell)|^2$ with the CMSIR algorithm is also Chi-square distributed with $2(N-k-m+1)$ DOF. Moreover, since 
 $\bq^T_{\ell}$ is independent of $ \bH_{0n}$ for $n \notin {\cal S}^{\star}_{0 \ell}$, $\rho_n^{\ell} = |\bq^T_{\ell}\bH_{0n}(j)|^2$ is Chi-square distributed with $2$ DOF.

 Thus, without CSIT, from (\ref{eq:sircsir}), the SIR for the
$\ell^{th}$ stream using the CMSIR algorithm is given by
\begin{equation}\label{eq:sirmsir}
\SIR^{CMSIR}_{{\cal S}_0, \ell} = \frac{d^{-\alpha} s }
{\sum_{n:T_n\in \Phi \backslash \{\{T_0\}, {\cal S}^{\star}_{0 \ell}\}} d_n^{-\alpha} \rho_n}.
\end{equation}

  \end{enumerate}
  
\begin{rem}\label{restrictionSIR}
The CMSIR algorithm (\ref{eq:msir}) is restrictive since we have fixed $\bq_{\ell}^T =  \frac{\bH_{00}(\ell)^*\bS\bS^*}{|\bH_{00}(\ell)^*\bS\bS^*|}$, where $\bS$ is the orthonormal basis of the null space of interferers to be canceled. 
The advantage of this restriction is that the canceling vector $\bq_{\ell}^T$ does not depend on $\bH_{0n}, n\in \{\Phi \backslash \{ \{0\}, {\cal S}_{0 \ell}^\star\}\}$, and consequently the signal and interference powers after interference cancelation are not correlated. 
In general, with unrestricted SIR maximization (\ref{eq:sirmax}), $\bq_{\ell}^T$ could 
depend on $\bH_{00}$, $\bH_{0n}, n\in {\cal S}_{0\ell}^\star$ as well as $\bH_{0n}, n\in \{\Phi \backslash {\cal S}_{\ell}^\star\}$, and consequently the signal and interference powers are correlated, and finding the distribution of the signal and the interference power after interference cancelation is challenging.
Even though the CMSIR algorithm is restrictive, by definition it is better than canceling the nearest interferers, 
  since canceling the nearest interferers lies in the feasible set of the optimization problem (\ref{eq:msir}) solved with the CMSIR algorithm.
  \end{rem}
    \begin{rem} (CSI requirement:) For canceling the nearest interferers, each receiver requires  information about the distance of each interferer, and  CSI for only the nearest interferers it wishes to cancel. Since the distances of the interferers are assumed to vary slowly with time, it is easy to acquire the distance information with low overhead. In practical systems this can be done by averaging the received signal strength indicator for example. In contrast, the CMSIR algorithm, requires CSI from all the interferers for finding the  SIR maximizing subset ${\cal S}^{\star}$. Thus, the CMSIR algorithm is computationally expensive as compared to canceling the nearest interferers, however, it provides better  performance compared to canceling the nearest interferers.
  \end{rem}

\subsection{Problem Formulation}

The original definition of the transmission capacity with a single transmit antenna is 
$C^{siso}_{\epsilon} \bydef \lambda(1-\epsilon)R$ \cite{Weber2005}, where $\lambda$ is the maximum density of transmitters per unit area such that the outage probability at any receiver is less than  $\epsilon$ with rate of transmission $R$ bits/sec/Hz. 
With an outage probability constraint of $\epsilon$ at rate $R$ bits/sec/Hz, the average throughput of each transmitter-receiver link is $(1-\epsilon)R$, and the 
transmission capacity metric accounts for the average network throughput by adding the average throughput of all the $\lambda$ transmitters per unit area. 
We take a similar viewpoint in defining the transmission capacity with multiple data stream transmission as follows. Similar definition can also be found in \cite{Marios2009, McKayTC2011}.

Let the rate of transmission on each data stream be $R$ bits/sec/Hz. Then the transmission is considered successful on any stream if the 
per-stream SIR  is above a  threshold $\beta$ (function of  $R$). 
Consequently, the  per-stream outage (failure) event  is defined as the event that the SIR on that stream is below a  threshold $\beta$, and the outage probability for the $\ell^{th}$ stream without CSIT is defined to be 
\begin{eqnarray}\label{poutclassicalcsir} 
P_{out}(\ell) &\bydef& P\left(\SIR_{{\cal S}_0, \ell} \le   \beta\right), \ \ell =1,2,\dots, k,
\end{eqnarray}
 and with CSIT to be  
\begin{eqnarray}\label{poutclassicalcsit}
P_{out}^{BF}(\ell) &\bydef& P\left(\SIR_{{\cal S}_0, \ell}^{BF} \le   \beta\right),  \ \ell =1,2,\dots, k.
\end{eqnarray}
Since all the $k$ data streams are independent, we model each transmitter-receiver link as $k$ bit pipes (interfering) with each pipe operating at  $R$ bits/sec/Hz. 
Let  $\epsilon$ be the outage probability constraint of each bit pipe, i.e. $P_{out}(\ell) \le \epsilon$ or $P_{out}^{BF}(\ell) \le \epsilon$ at rate $R$ bits/Hz. Then combining the $k$ data streams, the average throughput for each transmitter-receiver link is $k(1-\epsilon)R$.

 From (\ref{eq:sircsir}), note that without CSIT, $\SIR_{{\cal S}_0, \ell}$ is identically distributed for $\forall \ \ell, \ell =1,2,\dots,k$. Thus, without CSIT, outage probability of any stream can be used for defining the transmission capacity.  With CSIT, however, from (\ref{eq:sircsit}), we can see that the 
$\SIR_{{\cal S}_0, \ell}^{BF}$ is a decreasing function of $\ell, \ \ell =1,2,\dots, k$, since eigenvalues $\gamma_{\ell}$ are 
indexed in the decreasing order, and therefore $P_{out}^{BF}(\ell)$ is not identically distributed for $\ell, \ \ell=1,\dots,k$. Hence with CSIT, to account for the worst case scenario,  we use $P_{out}^{BF}(k)$ for defining the transmission capacity, since it provides 
an upper bound on $P_{out}^{BF}(\ell), \ \ell=1,2\dots,k$.
The formal definitions of the transmission capacity for multiple antennas with and without CSIT are  as follows.

\begin{defn}\label{def:tc}
Without CSIT, since $P_{out}(\ell)$ is identically distributed for $\ell =1,2,\dots, k$, the multiple antenna transmission capacity without CSIT  is defined as
$C_{\epsilon} \bydef k\lambda^{\star}(1-\epsilon)R$, where $\lambda^{\star} = \arg \max_{\lambda} \{P_{out}(1)\le \epsilon  \}$.
\end{defn}
\begin{defn}\label{def:tcbf}
With CSIT, as described before, $P_{out}^{BF}(k)$ provides 
an upper bound on $P_{out}^{BF}(\ell), \ \ell=1,2\dots,k$. Thus accounting for the worst case scenario, 
the transmission capacity with CSIT  and  multiple antennas is defined as 
$C_{\epsilon}^{BF} \bydef k\lambda_{BF}^{\star}(1-\epsilon)R$,
where $\lambda_{BF}^{\star} = \arg \max_{\lambda} \{P_{out}^{BF}(k)\le \epsilon\}$.
\end{defn}

In the next two sections we derive upper and lower bounds on the transmission capacity with and without CSIT to derive the optimal $k$ and $m$ that maximize the transmission capacity.

\section{Without CSIT}
\label{sec:csir} 
In this section we derive upper and lower bounds on the multiple antenna  transmission capacity without CSIT. 
 We consider both the interference cancelation algorithms, canceling the $c(k,m)$ nearest 
interferers, as well as using the CSMIR algorithm, and find the optimal number of streams $k$, and the optimal SRDOF for  interference cancelation $m$, that maximize the transmission capacity without CSIT. 

\subsection{Canceling the Nearest Interferers}
\label{sec:csirclosest}
To calculate the transmission capacity, we first calculate the outage probability defined in (\ref{poutclassicalcsir}).
From (\ref{poutclassicalcsir}),  $P_{out}(\ell)$ is identically distributed $\forall \ \ell$, we drop the index $\ell$ from $P_{out}(\ell)$ and 
write it as $P_{out}$. Without CSIT, from (\ref{eq:sircsirnc}) and  (\ref{poutclassicalcsir}), the outage probability while canceling the $c(k,m)$ nearest interferers at the receiver is 
\begin{eqnarray*}
P_{out} &=& P\left(\frac{d^{-\alpha} s }{I_{sum} } \le \beta\right),
\end{eqnarray*}
where $s$ is  Chi-square distributed with $2(N-m-k+1)$ DOF, and $I_{sum} \bydef \sum_{n=c(k,m)+1}^{\infty}d_n^{-\alpha} \rho_n$, $d_n\le d_m, \ n<m$, and $\rho_n$ is  Chi-square distributed with $2k$ DOF. An upper and lower bound on $P_{out}$ is presented  in the following Theorem to obtain a lower and upper bound on the transmission capacity, respectively.

\begin{thm}
\label{thm:upboundpoutclosest} Without CSIT  and canceling the $c(k,m)= \left\lfloor\frac{m}{k}\right\rfloor$ nearest interferers using partial ZF at the receiver, the outage probability is bounded by
\begin{eqnarray*}
P_{out} &\ge&
\left\{
\begin{array}{ll}
1-\frac{N-m-k+1}{(k-1)d^{\alpha} \beta\left(\pi\lambda\right)^{\frac{\alpha}{2}}}
\left( \left\lfloor\frac{m}{k}\right\rfloor+\frac{5}{8}+\frac{\alpha}{4}\right)^{\frac{\alpha}{2}}, & \text{for} \ k>1, \\
1-\frac{(N-m)}{d^{\alpha} \beta\left(\pi\lambda\right)^{\frac{\alpha}{2}}}
\left( m+\frac{13}{8}+\frac{\alpha}{4}\right)^{\frac{\alpha}{2}} ,  & \text{for} \ k=1,
\end{array}\right. \\
P_{out}&\le& \left\{
\begin{array}{ll} 1-e^{-\beta (\tau +d^{\alpha})\left(\pi\lambda \right)^{\frac{\alpha}{2}}
\left(\frac{\alpha}{2}-1\right)^{-1}\left( \left\lfloor\frac{m}{k}\right\rfloor+1\right)^{1-\frac{\alpha}{2}}
}, \ \ \ \ \ \ \ \ \ \ \ \ \ \ k+m=N, &  \\
\frac{k\beta d^{\alpha} \left(\pi\lambda \right)^{\frac{\alpha}{2}}}{N-m-k}
\left(
\left(\frac{\alpha}{2}-1\right)^{-1}\left( \left\lfloor\frac{m}{k}\right\rfloor+1\right)^{1-\frac{\alpha}{2}}
\right),   \ \ \ \  \ \ \ \ \ \ \ \ \ \text{otherwise.} 
\end{array}
\right. 
\end{eqnarray*}
\end{thm}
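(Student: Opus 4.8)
The plan is to treat the signal power $s$ and the aggregate interference $I_{sum} \bydef \sum_{n=c+1}^{\infty} d_n^{-\alpha}\rho_n$ (with $c \bydef \lfloor m/k\rfloor$) as independent random variables and to control the two tails of $P_{out} = P(s \le \beta d^\alpha I_{sum})$ by separate applications of Markov's inequality, fed by the distributional facts already recorded: $s$ is $\mathrm{Gamma}(N-m-k+1,1)$, each $\rho_n$ is $\mathrm{Gamma}(k,1)$, and for a homogeneous PPP of intensity $\lambda$ the ordered distances obey $\pi\lambda d_n^2 \sim \mathrm{Gamma}(n,1)$. Consequently $\bbE\{d_n^{-\alpha}\} = (\pi\lambda)^{\alpha/2}\Gamma(n-\tfrac{\alpha}{2})/\Gamma(n)$ (finite precisely because the standing assumption forces $c>\tfrac{\alpha}{2}-1$, hence $n>\tfrac{\alpha}{2}$ for every retained term) and $\bbE\{d_n^{\alpha}\} = (\pi\lambda)^{-\alpha/2}\Gamma(n+\tfrac{\alpha}{2})/\Gamma(n)$.

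For the lower bound on $P_{out}$, I would write $P_{out} = 1 - P(s \ge \beta d^\alpha I_{sum})$ and bound the success probability from above. Conditioning on $I_{sum}$ and using $s$ independent of $I_{sum}$, Markov gives $P(s \ge \beta d^\alpha I_{sum}) \le \tfrac{\bbE\{s\}}{\beta d^\alpha}\bbE\{1/I_{sum}\} = \tfrac{N-m-k+1}{\beta d^\alpha}\bbE\{1/I_{sum}\}$, which already produces the numerator. It then remains to upper bound $\bbE\{1/I_{sum}\}$ by retaining only the dominant non-canceled interferer(s). For $k>1$ I would use $I_{sum} \ge d_{c+1}^{-\alpha}\rho_{c+1}$ and factor into $\bbE\{d_{c+1}^\alpha\}\,\bbE\{1/\rho_{c+1}\}$ with $\bbE\{1/\rho_{c+1}\} = 1/(k-1)$ — the source of the $(k-1)$ denominator and the reason $k=1$ must be isolated. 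For $k=1$, where $\bbE\{1/\rho\}$ diverges, I would keep two interferers and exploit $d_{c+1}\le d_{c+2}$ to get $I_{sum} \ge d_{c+2}^{-\alpha}(\rho_{c+1}+\rho_{c+2})$ with $\rho_{c+1}+\rho_{c+2}\sim\mathrm{Gamma}(2,1)$ of reciprocal mean $1$. In both cases the surviving distance moment is a Gamma ratio $\Gamma(x+\tfrac{\alpha}{2})/\Gamma(x)$ (with $x=c+1$ and $x=c+2=m+2$, respectively), which I would bound by a Kershaw-type inequality $\Gamma(x+s)/\Gamma(x)\le (x+\tfrac{s}{2}-\tfrac38)^s$ at $s=\tfrac{\alpha}{2}$; this delivers exactly the constants $\tfrac58$ and $\tfrac{13}{8}$.

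For the upper bound on $P_{out}$, I would split on whether the signal is exponential. When $N-m-k+1\ge 2$ I would pass to the reciprocal, $P_{out} = P\!\big(I_{sum}/s \ge 1/(\beta d^\alpha)\big)$, and apply Markov to the ratio, using $\bbE\{1/s\} = 1/(N-m-k)$ and independence to obtain $P_{out} \le \beta d^\alpha\,\bbE\{I_{sum}\}/(N-m-k)$; here $\bbE\{I_{sum}\} = k(\pi\lambda)^{\alpha/2}\sum_{n\ge c+1}\Gamma(n-\tfrac{\alpha}{2})/\Gamma(n)$, and I would bound the tail sum by integral comparison to get $(\tfrac{\alpha}{2}-1)^{-1}(c+1)^{1-\alpha/2}$, matching the ``otherwise'' bound. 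The case $k+m=N$ gives $N-m-k+1=1$, where $\bbE\{1/s\}$ diverges and the ratio step fails; there $s$ is exponential, so I would instead use $P_{out} = 1-\bbE\{e^{-\beta d^\alpha I_{sum}}\}$ with Jensen's inequality $\bbE\{e^{-\beta d^\alpha I_{sum}}\}\ge e^{-\beta d^\alpha \bbE\{I_{sum}\}}$, the constant $\tau$ packaging the interference-mean contribution, to recover the exponential-form bound.

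I expect the main obstacle to be the two degenerate regimes rather than the generic estimate: the divergence of $\bbE\{1/\rho\}$ at $k=1$ and of $\bbE\{1/s\}$ at $k+m=N$ both break the direct Markov argument, and each needs a tailored fix (grouping two interferers, and exploiting the exact exponential/Laplace structure, respectively). The secondary technical points are verifying finiteness of $\bbE\{I_{sum}\}$ — which relies entirely on $c(k,m)>\tfrac{\alpha}{2}-1$ — and pinning the explicit constants through the Gamma-ratio and tail-sum inequalities.
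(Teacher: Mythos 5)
Your proposal is correct and follows essentially the same route as the paper's own proof: Markov's inequality applied to the success probability with the interference truncated to the nearest non-canceled interferer (two interferers when $k=1$, to avoid the divergence of $\bbE\{1/\rho\}$) plus a Gamma-ratio bound for the lower bound, and Markov on $I_{sum}/s$ for $N>k+m$ together with the Jensen/exponential argument for $N=k+m$ for the upper bound. The only cosmetic differences are the order in which you apply Markov versus the interference truncation, and that your Kershaw-type inequality $\Gamma(x+s)/\Gamma(x)\le(x+\tfrac{s}{2}-\tfrac38)^{s}$ is an equivalent restatement of the Laforgia inequality the paper cites, yielding the same constants $\tfrac58$ and $\tfrac{13}{8}$.
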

\begin{proof}
The outline of the proof is as follows. To derive a lower bound, we consider the interference contribution
from only the nearest non-canceled interferer (the $c(k,m)+1^{st}$
interferer) $I_{c(k,m)+1}\bydef d_{c(k,m)+1}^{-\alpha}\rho_{c(k,m)+1}  $,
since $I_{c(k,m)+1} < I_{sum}$, and 
$P_{out} = \left( \frac{d^{-\alpha} s }{I_{sum}} \le 2^R-1\right) \ge \left( \frac{ d^{-\alpha}s } {I_{c(k,m)+1}} \le 2^R-1\right).$ To derive an upper bound,
we use the Markov inequality with $\frac{I_{sum}}{s}$ as the random
variable.
With $\bbE\{s\} = N-m-k+1$, since $s$ is distributed as Chi-square with $2(N-m-k+1)$ DOF, the detailed proof is derived in Appendix \ref{lboundpoutclosest},  and Appendix \ref{upboundpoutclosest}, respectively. \end{proof}
Using Theorem \ref{thm:upboundpoutclosest}, the optimal $k$ and $m$ that provide the best scaling of transmission
capacity with respect to $N$ is given by the next Corollary.

\begin{cor}\label{cor:upboundpoutclosest} Without CSIT and canceling the $ \left\lfloor\frac{m}{k}\right\rfloor$ nearest interferers, using a single transmit antenna ($k=1$) and a fraction of
the total SRDOF for interference cancelation ($m=\theta N, \ \theta \in (0,1)$)
maximizes the upper and lower bound on the transmission capacity, and provides with the best scaling of the transmission capacity with the number of antennas $N$.
\end{cor}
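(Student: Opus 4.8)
The plan is to convert the two-sided bound on $P_{out}$ from Theorem~\ref{thm:upboundpoutclosest} into a two-sided bound on $C_\epsilon=k\lambda^\star(1-\epsilon)R$ and then to maximize each bound over the pair $(k,m)$. Since $\lambda^\star$ is the largest density with $P_{out}\le\epsilon$ and every branch of the theorem is increasing in $\lambda$, I would invert by setting the relevant bound equal to $\epsilon$ and solving for $\lambda$. Substituting the upper bound on $P_{out}$ (the branch valid for $k+m<N$) gives a lower bound
\[
C_\epsilon \;\ge\; c_1\,(N-m-k)^{2/\alpha}\bigl[\,k(\lfloor m/k\rfloor+1)\,\bigr]^{1-2/\alpha},
\]
where $c_1$ absorbs the constants $\alpha,\beta,d,\epsilon,R$; the lower bound on $P_{out}$ will supply the corresponding upper bound on $C_\epsilon$, treated further below.

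The heart of the argument is to maximize this lower bound over the admissible pairs, where $\lfloor m/k\rfloor>\tfrac{\alpha}{2}-1$ and $k+m<N$. First I would isolate the number of streams: $k$ enters only through $k(\lfloor m/k\rfloor+1)$, and because $\lfloor m/k\rfloor\le m/k$ we have $k(\lfloor m/k\rfloor+1)\le m+k$, with equality whenever $k\mid m$ and in particular for $k=1$. Hence for every fixed budget $s\bydef m+k$ the bound is largest at $k=1$, which removes the floor loss, and the problem collapses to maximizing $(N-s)^{2/\alpha}s^{\,1-2/\alpha}$ over $s\in(0,N)$. A logarithmic-derivative calculation places the optimum at $s^\star=(1-\tfrac{2}{\alpha})N$, so the optimal $m$ is a fixed fraction of $N$, i.e.\ $m=\theta N$ with $\theta=1-\tfrac{2}{\alpha}\in(0,1)$, and the maximal lower bound is $\Theta(N)$. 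Crucially, the inequality $k(\lfloor m/k\rfloor+1)\le m+k\le N$ forces the bound to be $\cO(N)$ for every $(k,m)$ — including the regime $k=\Theta(N)$, where $\lfloor m/k\rfloor$ is a constant — so linear growth cannot be exceeded and $k=1,\ m=\theta N$ attains the best achievable scaling.

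For the upper bound on $C_\epsilon$ I would repeat the inversion with the lower bound on $P_{out}$, keeping the $k=1$ and $k>1$ branches separate. For $k>1$ and large $\lfloor m/k\rfloor$ the $(k,m)$-dependence reduces to $(k-1)^{-2/\alpha}(N-m-k)^{2/\alpha}m$, whereas the $k=1$ branch gives $(N-m)^{2/\alpha}m$; since $(k-1)^{-2/\alpha}\le1$ and $N-m-k\le N-m$, the choice $k=1$ again maximizes the bound. Optimizing $(N-m)^{2/\alpha}m$ over $m$ then puts the maximizer at $m=\tfrac{\alpha}{\alpha+2}N$, once more a constant fraction of $N$. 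Thus $k=1$ together with $m$ proportional to $N$ maximizes both the lower and the upper bound, which is exactly the assertion of the corollary.

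I expect the main obstacle to be the floor $\lfloor m/k\rfloor$. Were it absent the $k$-dependence would cancel identically and every $k$ would tie; it is precisely the floor loss $k(\lfloor m/k\rfloor+1)\le m+k$, sharpened by the ``$+1$'' that records the nearest non-canceled interferer, that breaks the tie in favor of $k=1$. The remaining point demanding care is bookkeeping across the branch structure of Theorem~\ref{thm:upboundpoutclosest}: one must verify that the proposed optimizer stays inside the branch used for the inversion (here $k+m<N$) and continues to satisfy the standing constraint $\lfloor m/k\rfloor>\tfrac{\alpha}{2}-1$, both of which hold for $k=1,\ m=\theta N$ at large $N$.
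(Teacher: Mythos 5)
Your proposal is correct and reaches the paper's conclusion, but by a genuinely different optimization argument. The first step coincides with the paper's: invert Theorem~\ref{thm:upboundpoutclosest} at $P_{out}=\epsilon$ to obtain bounds on $C_\epsilon$, and your algebra is right, since the paper's lower bound is indeed proportional to $(N-m-k)^{2/\alpha}\bigl[k(\lfloor m/k\rfloor+1)\bigr]^{1-2/\alpha}$. From there the paper proceeds by enumerating parametric families: it evaluates both bounds along $k=\theta_2 N^t$, $m=\theta_1 N^p$, compares the resulting exponents to conclude that $t=0$, $p=1$ is best, and only then optimizes the constant by maximizing $(1-\theta)^{2/\alpha}\theta^{1-2/\alpha}$ to get $\theta=1-\tfrac{2}{\alpha}$. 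You instead collapse the two-variable problem exactly: the inequality $k(\lfloor m/k\rfloor+1)\le m+k$, with equality at $k=1$, reduces the maximization over all admissible $(k,m)$ to the single-variable problem $\max_s (N-s)^{2/\alpha}s^{1-2/\alpha}$, whose solution $s^\star=(1-\tfrac{2}{\alpha})N$ recovers the same $\theta$; your observation that $k(\lfloor m/k\rfloor+1)\le N$ caps the lower bound at $\cO(N)$ uniformly is also sound. Your route buys uniformity, covering every pair $(k,m)$ at once (including $k=\Theta(N)$, which the paper handles only through its chosen parametrizations), while the paper's enumeration buys explicit scaling exponents for each sub-optimal regime (e.g.\ $\Theta(N^{2/\alpha})$ for constant $m$), which feed its later discussion. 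Two small caveats in your write-up: equality in $k(\lfloor m/k\rfloor+1)\le m+k$ holds whenever $k\mid m$, so $k=1$ attains but is not the unique maximizer of the lower bound at fixed budget (the paper's claim of strict dominance of $k=1$ has the same blemish); and in the $k>1$ upper-bound branch, the reduction to $(k-1)^{-2/\alpha}(N-m-k)^{2/\alpha}m$ drops an additive $\Theta(k)$ term coming from $k\bigl(\lfloor m/k\rfloor+\tfrac{5}{8}+\tfrac{\alpha}{4}\bigr)\le m+\bigl(\tfrac{5}{8}+\tfrac{\alpha}{4}\bigr)k$, which is harmless for the scaling conclusion but should be carried along if you want the pointwise claim that $k=1$ maximizes that bound.
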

\begin{proof}
Let $P_{out}=\epsilon$, then from Theorem \ref{thm:upboundpoutclosest},
\begin{eqnarray*}
C_{\epsilon} &\le & \left\{\begin{array}{ll}
\frac{kR(1-\epsilon)^{1-\frac{2}{\alpha}}}{\pi}\left(
\frac{N-m-k+1}{(k-1)d^{\alpha} \beta}\right)^{\frac{2}{\alpha}}
\left(\left\lfloor\frac{m}{k}\right\rfloor +\frac{5}{8}+\frac{\alpha}{4}\right), & \text{for} \ k>1, \\
\frac{R(1-\epsilon)^{1-\frac{2}{\alpha}}}{\pi}\left(\frac{(N-m)}{d^{\alpha} \beta}\right)^{\frac{2}{\alpha}}\left(m +\frac{13}{8}+\frac{\alpha}{4}+1\right) ,  & \text{for} \ k=1,
\end{array}\right. \\
C_{\epsilon}&\ge &
\left\{\begin{array}{ll}

\frac{kR(1-\epsilon)}{\pi}\left(\frac{-\ln(1-\epsilon)}{k \beta d^{\alpha} }\right)^{\frac{2}{\alpha}}\left(
\left(\frac{\alpha}{2}-1\right)^{-1}\left(\left\lfloor\frac{m}{k}\right\rfloor+1\right)^{1-\frac{\alpha}{2}}
 \right)^{-\frac{2}{\alpha}},   \ \ \text{for} \ k+m=N, & \\
\frac{kR(1-\epsilon)}{\pi}\left(\frac{\left(N-k-m\right)\epsilon}{k \beta d^{\alpha} }\right)^{\frac{2}{\alpha}}\left(
\left(\frac{\alpha}{2}-1\right)^{-1}\left(\left\lfloor\frac{m}{k}\right\rfloor+1\right)^{1-\frac{\alpha}{2}}
\right)^{-\frac{2}{\alpha}}, \ \ \ \text{otherwise}. &
\end{array}\right.
\end{eqnarray*}

We evaluate the upper and lower bound for different values of $k$ and $m$ to identify the scaling behavior of the transmission capacity as follows.
\begin{itemize}
\item $k=1, m=c$, where $c$ is a constant that does not depend on $N$:  \ $C_{\epsilon} = \Theta\left(N^{\frac{2}{\alpha}}\right)$. Note that same scaling is obtained for any constant value of $k$ and $m$.
\item  $k=1, m=N-c$, where $c$ is a constant that does not depend on $N$: \  $C_{\epsilon} = \Omega\left(N^{1-\frac{2}{\alpha}}\right)$, and $C_{\epsilon} = {\cal O}\left(N\right)$.
\item $k=1, m=\theta_1N^p, \ p\in [0,1], \ \theta_1 \in (0,1]$:  \ $C_{\epsilon} = \Omega\left(N^{\frac{2}{\alpha} +p\left(1-\frac{2}{\alpha}\right)}\right)$, 
and $C_{\epsilon} = {\cal O}\left(N^{\frac{2}{\alpha} +p}\right)$. 
The upper and lower bound is maximized at $p=1$, and results in  $C_{\epsilon} = \Omega\left(N\right)$, and $C_{\epsilon} = {\cal O}\left(N^{\frac{2}{\alpha} +1}\right)$.

\item $k=\theta_2N^t, m=\theta_1N^p, \ p,t \in [0,1], \ \theta_1, \theta_2 \in (0,1]$: If $p <t$,  $C_{\epsilon} = \Theta\left(N^{t\left(1-\frac{2}{\alpha}\right)+\frac{2}{\alpha}}\right)$, which is maximized at $t=1$ and results in $C_{\epsilon} = \Theta\left(N\right)$, else $C_{\epsilon} = \Omega\left(N^{\frac{2}{\alpha} +p\left(1-\frac{2}{\alpha}\right)}\right)$, and $C_{\epsilon} = {\cal O}\left(N^{p+ \frac{2}{\alpha}(1-t)}\right)$. The upper and lower bound is maximized at $p=1$, and $t=0$,  and results in  
$C_{\epsilon} = \Omega\left(N\right)$, and $C_{\epsilon} = {\cal O}\left(N^{\frac{2}{\alpha} +1}\right)$.
\end{itemize}


Thus, with $k$ being any constant independent of $N$, and $m= \theta N, \ \theta \in (0,1]$ provides the best 
scaling of transmission capacity with $N$. Note that the derived upper bound on the transmission capacity is larger for $k=1$ than for $k>1$, hence the upper bound on the transmission capacity is maximized at $k=1$. Moreover,  the lower bound on the transmission capacity is proportional to $\frac{k^{1-2/\alpha} (N-k)^{2/\alpha}} {((\left\lfloor\frac{\theta N}{k}\right\rfloor+1)^{1-\alpha/2}+\kappa_2)^{\frac{2}{\alpha}}}$, for small constant $\kappa_2$. Thus, the lower bound on the transmission capacity for $k=1$ is greater than for $k>1$. 
Hence, $k=1$ maximizes the derived upper and lower bound on the transmission capacity.\footnote{ For example, in Fig. \ref{fig:lbubvsk}  we plot the upper and lower bound on the transmission capacity for $N=10$ as a function of $k$ for  $m = \min \left\{ \left(1-\frac{2}{\alpha}\right)N, \left\lfloor\frac{N-k}{k}\rfloor\right)   \right\}$, $d=1$ m, $\alpha=3$, $\beta=1$ and $\epsilon=0.1$.}

With $k=1$ and $m= \theta N, \ \theta \in (0,1]$,
\[C_{\epsilon} \ge \frac{NR(1-\epsilon)}{\pi} (1-\theta)^{\frac{2}{\alpha}}
\theta^{1-\frac{2}{\alpha}} \left(\frac{\left(1-\frac{1}{N(1-\theta)}\right)\epsilon}{ \beta d^{\alpha}\left(
\left(\frac{\alpha}{2}-1\right)^{-1}
+ \left(\frac{1}{\theta}\right)^{1-\frac{2}{\alpha}}\sum_{i=1}^{ 1+\left\lceil\frac{\alpha}{2}\right\rceil }\sigma\left(i\right) \right)}\right)^{\frac{2}{\alpha}}
.\]
For large $N$,
to find the optimal value of $\theta_1$ that maximizes the lower bound, we
need to maximize the function $(1-\theta)^{\frac{2}{\alpha}}
\theta^{1-\frac{2}{\alpha}}$. Solving by setting the derivative to zero,
the optimal value of $\theta = 1-\frac{2}{\alpha}$. Note that the lower bound on the transmission capacity is concave in $m$. Thus, to enforce the 
integer constraint on $m$, $m$ should be chosen as $\left\lfloor \left(1-\frac{2}{\alpha}\right)N\right\rfloor$ or  $\left\lceil\left(1-\frac{2}{\alpha}\right)N\right\rceil$ depending on whichever value maximizes the lower bound.
\end{proof}

{\it Discussion:} In this subsection we showed that without transmit beamforming, transmitting a single data stream $k=1$ and using $m= \theta N, \theta\in(0,1]$ SRDOF with partial ZF for canceling the nearest interferers maximizes the derived upper and lower bound on the transmission capacity.  With this optimal choice of $k$ and $m$, the transmission capacity scales linearly with  the number of antennas $N$. Our result is a generalization of \cite{Jindal2008a}, where $k$ was
fixed to $1$ and a lower bound on the transmission capacity was
shown to scale linearly with $N$, for $m=\theta N$.  Thus it is optimal to use a single
transmit antenna ($k=1$) even when there are multiple transmit
antennas available at the transmitter. 
Compared to the sublinear scaling of the transmission capacity with the number of receive antennas when no interference 
cancelation is used \cite{Hunter2008}, we show that by using interference cancelation, the transmission capacity scales linearly with the number of receive antennas. Thus, our result confirms the importance of  interference cancelation in ad-hoc networks, and 
reveals that significance performance gains can be achieved from its application.

The physical interpretation of our result is as follows. Transmitting $k$ data streams simultaneously, 
the transmission capacity is $k$ times the per stream transmission capacity. Increasing $k$, however, decreases the per stream transmission capacity because of : 1) reduced power of the signal of interest, which is distributed as Chi-square with $2(N-k-m+1)$ DOF, 2) increased power of each interferer which is distributed as Chi-square with $2k$ DOF, since $k$ data streams are transmitted simultaneously, and 3) reduction in the number of interferers $c(k,m)\sim \frac{m}{k}$ that can be canceled. The reduction in per stream transmission capacity with increasing $k$ outweighs the linear increase in $k$ because of simultaneous transmission of $k$  data streams, and $k=1$ provides the best scaling of transmission capacity with $N$. At the receiver, using $m$ SRDOF for interference cancelation, the power of the signal of interest $\sim$ Chi-square with $2(N-m-k+1)$ DOF, and the number of interferers that can be canceled is $\sim \frac{m}{k}$. Thus using $k=1, m= \theta N, \theta\in(0,1)$, allows the power of the signal of interest and the number of interferers canceled to grow linearly with $N$, and provides the best scaling of the transmission capacity.

\subsection{Cancelation Using the CMSIR Algorithm}
\label{sec:csirmsir}
From (\ref{eq:sirmsir}) and (\ref{poutclassicalcsir}), with the CMSIR algorithm, the outage probability for the $\ell^{th}$ stream 
$P_{out}^{CMSIR}(\ell) = P(\SIR^{CMSIR}_{ {\cal S}^{\star}_{0\ell}} \le \beta)$ is identically distributed $\forall \ \ell = 1,2,\dots, k$. Hence we drop the index $\ell$ from $P_{out}^{CMSIR}(\ell) $, and represent it 
as $P_{out}^{CMSIR} = P\left(\SIR^{\text{CMSIR}}\le \beta\right)$, where 
\[\SIR^{\text{CMSIR}} =  \max_{ 
{\cal S}, \ {\cal S} \subset \Phi\backslash \{T_0\}, \ |{\cal S}| = c(k,m)}\frac{ d^{-\alpha}  s }{\sum_{n: T_n \in \Phi \backslash \{T_0, \cal S\} }  d_n^{-\alpha}\rho_n}.
\] 
Following Remark \ref{restrictionSIR}, since CMSIR algorithm is better than canceling the nearest interferers, $\SIR^{\text{CMSIR}}$ is greater than or equal to the SIR while canceling the nearest interferers as in Section \ref{sec:csirclosest}.
Therefore the outage probability with the CMSIR algorithm  is upper bounded by the outage probability while canceling the nearest interferers. As a result, the  outage probability with the CMSIR algorithm is upper bounded by the upper bound derived in Theorem \ref{thm:upboundpoutclosest}.  Finding a lower bound on the outage probability with the CMSIR algorithm is comparatively non-trivial. Both bounds are summarized in the next Theorem.

\begin{thm}
\label{thm:mic} Without CSIT  and canceling the $c(k,m)= \left\lfloor\frac{m}{k}\right\rfloor$ interferers using the CMSIR algorithm, the outage probability bounds are given by Theorem \ref{thm:upboundpoutclosest}.
\end{thm}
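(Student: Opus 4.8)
The plan is to establish the two inequalities separately, since they have very different characters. The upper bound on $P_{out}^{CMSIR}$ is essentially free: because the nearest-interferer set is an admissible choice in the maximization defining $\SIR^{\mathrm{CMSIR}}$ (for the nearest set the optimal projection is exactly the fixed $\bq_\ell = \bH_{00}(\ell)^*\bS_0\bS_0^*/|\bH_{00}(\ell)^*\bS_0\bS_0^*|$ imposed by CMSIR), we have $\SIR^{\mathrm{CMSIR}} \ge \SIR_{{\cal S}_0,\ell}$ pointwise when ${\cal S}_0$ is the nearest $c(k,m)$ interferers, hence $P_{out}^{CMSIR} \le P_{out}$, and the upper bound of Theorem \ref{thm:upboundpoutclosest} transfers verbatim. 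This is the observation already recorded just before the statement, and it yields the lower bound on $C_\epsilon$.

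The work is entirely in the lower bound on $P_{out}^{CMSIR}$, which amounts to an upper bound on $\SIR^{\mathrm{CMSIR}}$. First I would isolate a single dominant residual interferer, as in the proof of Theorem \ref{thm:upboundpoutclosest}, but now via a pigeonhole argument to cope with the optimized cancellation. Since CMSIR cancels exactly $c(k,m)$ interferers while there are $c(k,m)+1$ interferers among the nearest $c(k,m)+1$ (ordered by distance $d_1 \le d_2 \le \cdots$), at least one of these nearest interferers survives the cancellation; call its index $p$, so that $d_p \le d_{c(k,m)+1}$. Dropping every other residual term from $\sum_{n \notin {\cal S}^{\star}_{0\ell}} d_n^{-\alpha}\rho_n$ gives
\[
\SIR^{\mathrm{CMSIR}} \le \frac{d^{-\alpha}s}{d_p^{-\alpha}\rho_p} \le \frac{d^{-\alpha}s}{d_{c(k,m)+1}^{-\alpha}\rho_p}.
\]
This reproduces the single-interferer structure of Theorem \ref{thm:upboundpoutclosest}: the relevant distance is $d_{c(k,m)+1}$, the $(c(k,m)+1)$-th nearest point of the PPP, whose law is unchanged; $s$ is Chi-square with $2(N-m-k+1)$ DOF; and $\rho_p=\sum_{j=1}^k|\bq_\ell^T\bH_{0p}(j)|^2$ is a Chi-square interference coefficient with the same $2k$ DOF used in Theorem \ref{thm:upboundpoutclosest}. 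Taking expectations over $d_{c(k,m)+1}$ and $\rho_p$ then collapses the bound onto the same closed form, so the remaining integrals coincide with those in Appendix \ref{lboundpoutclosest}.

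The hard part, and the reason this lower bound is ``comparatively non-trivial,'' is that the surviving index $p$ and the projection $\bq_\ell$ are both selected by the algorithm and are therefore correlated with the very channel $\bH_{0p}$ whose coefficient $\rho_p$ I need to control; this is exactly the coupling flagged in Remark \ref{restrictionSIR}. The key to breaking it is the CMSIR restriction itself: once the canceled set ${\cal S}^{\star}_{0\ell}$ is fixed, $\bq_\ell$ is a function of $\bH_{00}$ and the canceled channels only, hence independent of $\bH_{0p}$ for the non-canceled $p$, so $\rho_p$ is \emph{conditionally} Chi-square with $2k$ DOF. I would therefore condition on ${\cal S}^{\star}_{0\ell}$, apply the single-interferer bound together with this conditional Chi-square law, and then average over ${\cal S}^{\star}_{0\ell}$. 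I expect the only delicate point to be verifying that conditioning on which nearest interferer survives does not distort the Chi-square law invoked, which the decoupling provided by the fixed $\bq_\ell$ is precisely designed to guarantee; all remaining steps are identical to those already executed for Theorem \ref{thm:upboundpoutclosest}.
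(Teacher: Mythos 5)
Your proposal follows the paper's proof essentially step for step: the upper bound is inherited exactly as you say (nearest-interferer cancelation is feasible for CMSIR, so CMSIR's outage probability can only be smaller), and the paper's Appendix C proves the lower bound by the same pigeonhole argument you give — at least one of the $c(k,m)+1$ nearest interferers survives cancelation, the residual interference is lower bounded by $d_{c(k,m)+1}^{-\alpha}\rho_r$ for that survivor, and everything collapses onto the single-interferer computation of Appendix A.

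The one place you go beyond the paper is your attempted justification of the distributional step, and that justification does not hold up as stated. Conditioning on the event $\{{\cal S}^{\star}_{0\ell} = {\cal S}\}$ is not the same as fixing a deterministic set ${\cal S}$: which set is the argmax is decided by comparing residual interference across all candidate sets, so the conditioning event depends on the non-canceled channels $\bH_{0p}$ as well. A set is optimal precisely because the interferers it leaves uncanceled are weak, so conditional on ${\cal S}$ being optimal and $p \notin {\cal S}$, the law of $\rho_p$ is stochastically biased \emph{below} Chi-square with $2k$ DOF; the independence of $\bq_{\ell}$ from $\bH_{0p}$ that you invoke is valid only for a deterministic ${\cal S}$. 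This selection bias inflates the right-hand side $d^{-\alpha}s/(d_{c(k,m)+1}^{-\alpha}\rho_p)$ of your pointwise bound, which is exactly the unfavorable direction for a lower bound on outage, so the step genuinely needs an argument. You should know, however, that the paper does not supply one either: Appendix C simply asserts that $d_{c(k,m)+1}^{-\alpha}\rho_r$, with $r$ the random surviving index, has the same distribution as $I_{c(k,m)+1}$ and is independent of $s$. So your write-up is at the same level of rigor as the paper's own proof; a fully watertight version would have to control the selection induced by the maximization over candidate sets directly (the natural fixes, such as union bounds over candidate sets or taking a minimum over surviving coefficients, all degrade the constants), and neither you nor the paper does this.
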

\begin{proof}
The upper bound follows from Theorem \ref{thm:upboundpoutclosest}, while the lower bound is derived in Appendix \ref{app:mic}.
\end{proof}
Note that the upper and lower bounds derived in Theorem \ref{thm:mic} are identical to that of Theorem \ref{thm:upboundpoutclosest}, hence, we get the transmission capacity upper and lower bounds are identical to that of Corollary \ref{cor:upboundpoutclosest}.

\begin{cor}\label{cor:upboundpoutmsirclosest} Without CSIT and canceling the $c(k,m)= \left\lfloor\frac{m}{k}\right\rfloor$ interferers with the CMSIR algorithm, using a single transmit antenna ($k=1$) and a fraction of
total SRDOF for interference cancelation ($m=\theta N, \ \theta \in (0,1)$)
maximizes the scaling of the transmission capacity with the number of antennas $N$.
\end{cor}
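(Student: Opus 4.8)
The plan is to exploit the key fact recorded in Theorem \ref{thm:mic}: the upper and lower bounds on the outage probability $P_{out}^{CMSIR}$ under the CMSIR algorithm coincide \emph{exactly} with the bounds on $P_{out}$ for canceling the nearest interferers stated in Theorem \ref{thm:upboundpoutclosest}. Because the transmission capacity $C_{\epsilon}$ (Definition \ref{def:tc}) is obtained by inverting the per-stream outage constraint $P_{out}(1) \le \epsilon$ to solve for the maximal density $\lambda^{\star}$ and then scaling by $k(1-\epsilon)R$, identical outage bounds translate \emph{verbatim} into identical transmission-capacity bounds. Hence the two-case upper and lower bounds on $C_{\epsilon}$ obtained at the start of the proof of Corollary \ref{cor:upboundpoutclosest} hold unchanged for the CMSIR algorithm.

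With these bounds in hand, I would repeat the scaling analysis of Corollary \ref{cor:upboundpoutclosest} line for line. First, setting $P_{out}^{CMSIR} = \epsilon$ and inverting each bound yields the two-case expressions for the upper and lower bounds on $C_{\epsilon}$ as functions of $k$, $m$, $N$, and $\alpha$. Next, I would substitute the parametrizations $k = \theta_2 N^{t}$ and $m = \theta_1 N^{p}$ with $t,p \in [0,1]$ and track the exponent of $N$ in each case, exactly as in the bulleted enumeration. This shows that the best attainable scaling, $C_{\epsilon} = \Omega(N)$, is achieved precisely when $t=0$ (so that $k$ is a constant, with $k=1$ maximizing both the upper and lower bounds) and $p=1$ (so that $m = \theta N$). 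Finally, fixing $k=1$ and $m = \theta N$, I would maximize the leading coefficient $(1-\theta)^{2/\alpha}\theta^{1-2/\alpha}$ over $\theta \in (0,1)$, recovering the optimizer $\theta = 1 - \tfrac{2}{\alpha}$.

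Since every step is a direct transcription of the argument for Corollary \ref{cor:upboundpoutclosest}, there is essentially no new obstacle at the level of this corollary: the entire difficulty has already been absorbed into proving the lower bound of Theorem \ref{thm:mic} (deferred to Appendix \ref{app:mic}), where one must control the maximum over candidate cancelation sets ${\cal S}$ in $\SIR^{\text{CMSIR}}$ and show that, despite this optimization, the outage probability does not drop below the nearest-interferer lower bound. Once the identity of bounds granted by Theorem \ref{thm:mic} is taken as given, the optimization over $(k,m)$ is unchanged, and the conclusion that $k=1$ together with $m = \theta N$, $\theta = 1 - \tfrac{2}{\alpha}$, maximizes the scaling of the transmission capacity follows immediately.
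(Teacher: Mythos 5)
Your proposal is correct and follows exactly the paper's route: the paper likewise treats this corollary as an immediate consequence of Theorem \ref{thm:mic}, noting that since the outage bounds coincide with those of Theorem \ref{thm:upboundpoutclosest}, the transmission-capacity bounds and hence the optimization over $(k,m)$ are identical to Corollary \ref{cor:upboundpoutclosest}. You also correctly identify that all genuine difficulty resides in the lower bound of Theorem \ref{thm:mic} (Appendix \ref{app:mic}), which both you and the paper take as given at this point.
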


{\it Discussion:} In this subsection we showed 
that transmitting a single data stream $k=1$  together
with  using $m=\theta N$ SRDOF for interference cancelation provides the best scaling of  the transmission capacity with $N$ when the receiver uses the CMSIR algorithm, and the transmission capacity lower bound scales linearly with the number of antennas $N$. 
Note that this is identical to the transmission capacity scaling obtained  by canceling the nearest interferers ( Section \ref{sec:csirclosest}). 
Thus, the optimal transmission capacity scaling  is invariant to the two  cancelation algorithms considered.  
Since the CMSIR algorithm has a much higher CSI requirement, from a scaling perspective canceling the closest interferers is preferred. 
From a non-asymptotic perspective this is confirmed in the simulations, where we see that the transmission capacity while canceling the nearest interferers is not very inferior to that of the transmission capacity of the  CMSIR algorithm, and most of the gain with multiple antennas can be obtained by canceling the nearest interferers, which is fairly easy to implement in practice.


\section{With CSIT}
\label{sec:csit} 
In this section we consider multi-mode beamforming at the transmitter $T_n$ using CSIT of the direct link between each transmitter and its corresponding receiver $\bH_{nn}$.

With CSIT, where we consider that each transmitter uses multi-mode beamforming, from definition \ref{def:tcbf}, the transmission capacity is 
$C_{\epsilon}^{BF} \bydef k\lambda_{BF}^{\star}(1-\epsilon)R$,
where $\lambda_{BF}^{\star} = \arg \max_{\lambda} \{P_{out}^{BF}(k)\le \epsilon\}$. 
In this section we only consider the case when each receiver cancels the $c(k)$ nearest interferers in terms of their distance from $R_0$. 
Deriving exact analytical expression for $P_{out}^{BF}(k)$ requires knowledge of the distribution of $\gamma_k$, which is known 
but not amenable to analysis. 
To obtain upper and lower bounds on the outage probability, as will be seen later, it suffices to know the expected value of the 
maximum eigenvalue of $\bH_{00}\bH_{00}^*$, $\gamma_1$, and the expected value of the reciprocal of $\gamma_1$.
For large $N$, asymptotic results are available about the the maximum eigenvalue of $\bH_{00}\bH_{00}^*$, $\gamma_1$, which are summarized as follows. 
\begin{lemma}\label{lem:eigwishart}Let $\bH$ be a $N\times N$ matrix with i.i.d. ${\cal CN}(0,1)$ entries, and let $\gamma_1$ be the  maximum eigenvalue of the matrix $\bH\bH^*$. Then 
\begin{enumerate}
\item $\frac{\gamma_{1}}{N} \stackrel{a.s.}\rightarrow 4$, 
where $a.s.$ stands for almost sure convergence, 
\item $ \bbE\{\gamma_{1}\} = \Theta(N)$,
\item $\bbE\left\{\frac{1}{\gamma_{1}}\right\} = \Theta\left(\frac{1}{N}\right)$.
\end{enumerate}
\end{lemma}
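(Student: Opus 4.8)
The plan is to treat the three claims separately: claim (1) rests on one black-box result from the limiting spectral theory of Wishart matrices, while claims (2) and (3) follow from elementary moment estimates together with a single non-asymptotic norm bound.

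First, for claim (1) I would invoke the Marchenko--Pastur law. Since $\bH$ is $N\times N$ with i.i.d. unit-variance entries, the empirical distribution of the eigenvalues of $\frac{1}{N}\bH\bH^*$ converges to the Marchenko--Pastur law with ratio parameter $1$, whose support is $[0,4]$. The almost-sure convergence of the \emph{largest} eigenvalue to the right edge of this support is exactly the Bai--Yin theorem, which applies because the entries have finite fourth moment; this yields $\frac{\gamma_1}{N}\stackrel{a.s.}{\rightarrow}(1+\sqrt{1})^2=4$.

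Second, for claim (2) I would bound $\bbE\{\gamma_1\}$ on both sides. For the lower bound, since $\gamma_1$ is the maximum of the $N$ nonnegative eigenvalues of $\bH\bH^*$, it dominates their average: $\gamma_1 \ge \frac{1}{N}\sum_{i=1}^N \gamma_i = \frac{1}{N}\tr(\bH\bH^*) = \frac{1}{N}\sum_{i,j}|h_{ij}|^2$, and taking expectations over the $N^2$ entries with $\bbE\{|h_{ij}|^2\}=1$ gives $\bbE\{\gamma_1\}\ge \frac{1}{N}N^2 = N = \Omega(N)$. For the matching upper bound I would use the standard non-asymptotic estimate $\bbE\{\gamma_1\}=\bbE\{\|\bH\|^2\}\le CN$ for an absolute constant $C$; this can be obtained by combining $\bbE\{\|\bH\|\}\le C'\sqrt{N}$ with Gaussian concentration of the $1$-Lipschitz map $\bH\mapsto\|\bH\|$ (which bounds the variance of $\|\bH\|$ by a constant) to control the second moment, or equivalently by upgrading claim (1) to convergence in mean via uniform integrability. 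Together these give $\bbE\{\gamma_1\}=\Theta(N)$. For claim (3) the lower bound is then immediate from Jensen's inequality applied to the convex function $x\mapsto 1/x$ on $(0,\infty)$: $\bbE\{1/\gamma_1\}\ge 1/\bbE\{\gamma_1\}=\Omega(1/N)$. For the upper bound I would reuse the trace inequality to get $\frac{1}{\gamma_1}\le \frac{N}{\sum_{i,j}|h_{ij}|^2}$; since the $N^2$ terms $|h_{ij}|^2$ are i.i.d. unit-mean exponentials, their sum is $\mathrm{Gamma}(N^2,1)$, whose reciprocal has mean $\frac{1}{N^2-1}$, so $\bbE\{1/\gamma_1\}\le \frac{N}{N^2-1}=\Theta(1/N)$.

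The main obstacle is the upper bound in claim (2): the almost-sure statement (1) does not by itself control $\bbE\{\gamma_1\}$, so one genuinely needs a non-asymptotic input — a concentration or second-moment bound on $\|\bH\|$, or an explicit uniform-integrability argument — to rule out heavy upper tails of $\gamma_1$. Once that single fact is in place, the remaining pieces (the trace lower bound, Jensen's inequality, and the inverse-Gamma computation, the latter of which is self-contained and does not even rely on claim (1)) are routine.
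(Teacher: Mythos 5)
Your proof is correct, and it takes a genuinely different route from the paper's. The paper handles all three claims by citation: claim (1) is Proposition 6.1 of Edelman's thesis on eigenvalues of random matrices, and claims (2) and (3) are asserted to follow by repeating the proof of Edelman's Theorem 6.1 --- an expectation computation for Wishart eigenvalues --- with the $\log$ function replaced by the identity and by $x \mapsto 1/x$, respectively. You instead invoke Bai--Yin (equivalently, Geman's theorem in the Gaussian case) for (1), and for (2) and (3) you bypass Wishart-specific machinery altogether: the trace bound $\gamma_1 \ge \frac{1}{N}\tr(\bH\bH^*)$ yields both the lower bound $\bbE\{\gamma_1\} \ge N$ and, via the exact inverse-Gamma moment $\bbE\bigl\{\bigl(\sum_{i,j}|h_{ij}|^2\bigr)^{-1}\bigr\} = 1/(N^2-1)$, the non-asymptotic upper bound $\bbE\{1/\gamma_1\} \le N/(N^2-1)$; Jensen's inequality closes the lower bound in (3); and the single nontrivial input is $\bbE\{\gamma_1\} = \bbE\{\|\bH\|^2\} \le CN$, which you correctly obtain from $\bbE\{\|\bH\|\} \le C'\sqrt{N}$ plus Gaussian concentration of the $1$-Lipschitz map $\bH \mapsto \|\bH\|$. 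What each approach buys: the paper's proof is a one-line deferral but inherits the heavy (and unreproduced) eigenvalue-density integrations of the cited source, whereas yours is essentially self-contained modulo two standard facts, gives an explicit constant for (3), and correctly isolates the real technical issue --- the almost-sure convergence in (1) does not by itself yield the moment estimates (2) and (3) without tail or uniform-integrability control, a subtlety the paper's terse proof does not surface.
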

\begin{proof}1) follows  from Proposition $6.1$ \cite{Edelman}, while 2) and 3) can be derived using an identical proof to Theorem $6.1$ \cite{Edelman} 
by replacing $\log$ function by identity function, and $1/x$ function, respectively. 
\end{proof}
We illustrate the scaling behavior of $\bbE\{\frac{1}{\gamma_1}\}$ with $N$ in Fig. \ref{fig:recipcalmaxeig}, and find that $\bbE\{\frac{1}{\gamma_1}\}$ is sandwiched between $\frac{1}{3.5N}$ and $\frac{1}{4N}$.

\begin{figure}
\centering
\includegraphics[width=4.5in]{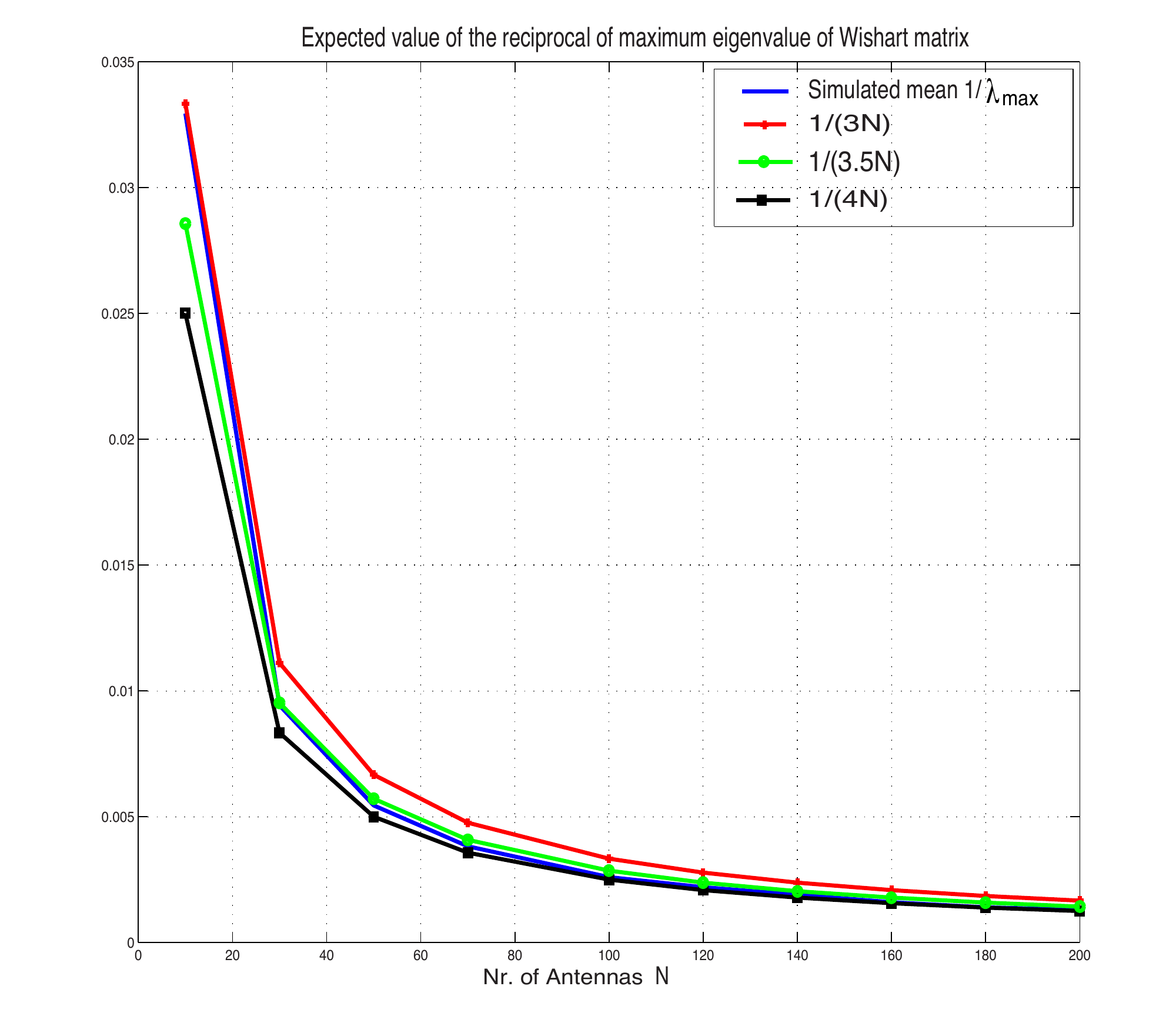}
\caption{Empirical expected value of the reciprocal of the largest eigenvalue  of
$\bH_{00}\bH_{00}^{*}$.}
\label{fig:recipcalmaxeig}
\end{figure}

An upper and
lower bound on the outage probability while canceling the
nearest interferers with CSIT is presented  in the next Theorem.
\begin{thm}
\label{thm:upboundpoutnear} With CSIT, when the transmitter uses multi-mode beamforming
and the receiver cancels the $c(k)=\left\lfloor\frac{N}{k}\right\rfloor-1$ nearest interferers using partial ZF,
the outage probability $P_{out}$ is bounded by
\begin{eqnarray*}
P_{out}^{BF} &\ge&
\left\{
\begin{array}{ll}
1-\frac{\bbE\{\gamma_k\}}{(k-1)d^{\alpha} \beta\left(\pi\lambda\right)^{\frac{\alpha}{2}}}
\left(\left\lfloor\frac{N}{k}\right\rfloor-1+\frac{5}{8}+\frac{\alpha}{4}\right)^{\frac{\alpha}{2}} , & \text{for} \ k>1, \\
1-\frac{\bbE\{\gamma_k\}}{d^{\alpha} \beta\left(\pi\lambda\right)^{\frac{\alpha}{2}}}
\left(\left\lfloor\frac{N}{k}\right\rfloor-1+\frac{13}{8}+\frac{\alpha}{4}\right)^{\frac{\alpha}{2}},  & \text{for} \ k=1,
\end{array}\right. \\
P_{out}^{BF}&\le& \left\{
\begin{array}{c}
(\pi\lambda)^{\frac{\alpha}{2}}k\beta d^{\alpha}
\bbE\{\frac{1}{\gamma_k}\}
\left(
\left(\frac{\alpha}{2}-1\right)^{-1}
\left(
\left\lfloor\frac{N}{k}\right\rfloor\right)^{1-\frac{\alpha}{2}}
\right). \\
\end{array}
\right.
\end{eqnarray*}

\end{thm}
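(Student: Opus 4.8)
The plan is to mirror the proof of Theorem~\ref{thm:upboundpoutclosest}, with $s$ replaced by the eigenvalue $\gamma_k$ and $\rho_n$ replaced by $\mu_n$ (Chi-square with $2k$ DOF, so $\bbE\{\mu_n\}=k$), and with the cancelled count $c(k,m)$ replaced by $c(k)=\lfloor N/k\rfloor-1$. Throughout I would use that $\gamma_k$, being an eigenvalue of the Wishart matrix $\bH_{00}\bH_{00}^{*}$, is independent of the interference $I_{sum}=\sum_{n=c(k)+1}^{\infty}d_n^{-\alpha}\mu_n$: by the eigenvalue--eigenvector independence of the complex Gaussian ensemble the $\mu_n$ depend on $\bH_{00}$ only through $\bU_{00}$, which is independent of the eigenvalues, and they are independent of the $d_n$ and of $\bH_{0n}$, $n\notin{\cal S}_0$. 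I would also use that for a homogeneous PPP of intensity $\lambda$ the distance $d_n$ to the $n$-th nearest point obeys $\pi\lambda d_n^2\sim\mathrm{Gamma}(n,1)$, whence $\bbE\{d_n^{-\alpha}\}=(\pi\lambda)^{\alpha/2}\Gamma(n-\tfrac{\alpha}{2})/\Gamma(n)$ and $\bbE\{d_n^{\alpha}\}=(\pi\lambda)^{-\alpha/2}\Gamma(n+\tfrac{\alpha}{2})/\Gamma(n)$.

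For the upper bound I would rewrite the outage event \eqref{poutclassicalcsit} as $\{I_{sum}/\gamma_k\ge(\beta d^{\alpha})^{-1}\}$ and apply Markov's inequality, giving $P_{out}^{BF}\le\beta d^{\alpha}\,\bbE\{I_{sum}/\gamma_k\}=\beta d^{\alpha}\,\bbE\{1/\gamma_k\}\,\bbE\{I_{sum}\}$ by independence, where $\bbE\{I_{sum}\}=k(\pi\lambda)^{\alpha/2}\sum_{n\ge\lfloor N/k\rfloor}\Gamma(n-\tfrac{\alpha}{2})/\Gamma(n)$. I would evaluate this tail sum in closed form by the telescoping identity
\[
\frac{\Gamma(n-\alpha/2)}{\Gamma(n)}=\frac{1}{\tfrac{\alpha}{2}-1}\bigl(a_n-a_{n+1}\bigr),\qquad a_n=\frac{\Gamma(n-\alpha/2)}{\Gamma(n-1)},
\]
which yields $\sum_{n\ge M}\Gamma(n-\tfrac{\alpha}{2})/\Gamma(n)=(\tfrac{\alpha}{2}-1)^{-1}\Gamma(M-\tfrac{\alpha}{2})/\Gamma(M-1)\le(\tfrac{\alpha}{2}-1)^{-1}M^{1-\alpha/2}$ for $M=\lfloor N/k\rfloor$. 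Convergence of this sum (and finiteness of $\bbE\{d_n^{-\alpha}\}$) is exactly where the standing hypothesis $c(k)>\tfrac{\alpha}{2}-1$ is used, since it guarantees $M-1>\tfrac{\alpha}{2}$ and $a_n\to0$. Substituting gives the claimed upper bound.

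For the lower bound I would retain only the nearest non-cancelled interferer, $I_{c(k)+1}=d_{c(k)+1}^{-\alpha}\mu_{c(k)+1}\le I_{sum}$, so that $P_{out}^{BF}\ge P(\gamma_k\le\beta d^{\alpha}I_{c(k)+1})=1-P(\gamma_k>\beta d^{\alpha}I_{c(k)+1})$. For $k>1$ I would bound the latter probability by Markov's inequality applied to $\gamma_k/(\beta d^{\alpha}I_{c(k)+1})$ and factor the expectation by independence, obtaining $1-\tfrac{\bbE\{\gamma_k\}}{\beta d^{\alpha}}\,\bbE\{d_{c(k)+1}^{\alpha}\}\,\bbE\{1/\mu\}$ with $\bbE\{1/\mu\}=1/(k-1)$. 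A Gautschi/Kershaw-type ratio bound $\Gamma(y+\tfrac{\alpha}{2})/\Gamma(y)\le(y-\tfrac38+\tfrac{\alpha}{4})^{\alpha/2}$, applied at $y=c(k)+1=\lfloor N/k\rfloor$, then produces the factor $(\lfloor N/k\rfloor-1+\tfrac58+\tfrac{\alpha}{4})^{\alpha/2}$, matching the $k>1$ line.

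The main obstacle is the case $k=1$, where $\mu\sim\exp(1)$ and $\bbE\{1/\mu\}=\infty$, so the Markov step above breaks down. Here I would instead condition on $\gamma_1$ and $d_{c(1)+1}$ and use the \emph{exact} exponential survival function, $P(\gamma_1\le\beta d^{\alpha}d_{c(1)+1}^{-\alpha}\mu\mid\gamma_1,d_{c(1)+1})=P\bigl(\mu\ge\gamma_1 d_{c(1)+1}^{\alpha}/(\beta d^{\alpha})\bigr)=e^{-\gamma_1 d_{c(1)+1}^{\alpha}/(\beta d^{\alpha})}$; applying $e^{-x}\ge1-x$ and taking expectations gives $P_{out}^{BF}\ge1-\tfrac{\bbE\{\gamma_1\}\,\bbE\{d_{c(1)+1}^{\alpha}\}}{\beta d^{\alpha}}$, which avoids the inverse moment entirely and naturally drops the $1/(k-1)$ factor. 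Re-applying the same Gamma-ratio bound at the relevant argument yields the constant $\tfrac{13}{8}$ in place of $\tfrac58$, completing the $k=1$ line. The only genuinely delicate points are this divergent-inverse-moment workaround and the precise Gamma-ratio constants; the independence of $\gamma_k$ from $I_{sum}$ should be flagged explicitly since $\mu_n$ does involve $\bU_{00}$.
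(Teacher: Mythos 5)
Your skeleton is the paper's own: the paper proves Theorem \ref{thm:upboundpoutnear} simply by re-running the proof of Theorem \ref{thm:upboundpoutclosest} (Appendices \ref{lboundpoutclosest} and \ref{upboundpoutclosest}) with $\gamma_k$ in the role of $s$ and $m=N-k$, i.e., Markov's inequality plus independence for the upper bound and retention of only the nearest non-canceled interference for the lower bound, and your $k>1$ lower bound reproduces that argument essentially verbatim. The genuine gap is the last step of your upper bound. Your telescoping identity is correct and evaluates the tail sum \emph{exactly}: $\sum_{n\ge M}\Gamma(n-\frac{\alpha}{2})/\Gamma(n)=(\frac{\alpha}{2}-1)^{-1}\Gamma(M-\frac{\alpha}{2})/\Gamma(M-1)$ with $M=\lfloor N/k\rfloor$. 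But the inequality $\Gamma(M-\frac{\alpha}{2})/\Gamma(M-1)\le M^{1-\frac{\alpha}{2}}$ that you then invoke is false --- it runs the wrong way. By Gautschi's inequality, for $2<\alpha<4$ and $M>2$ one has $\Gamma(M-1)/\Gamma(M-\frac{\alpha}{2})\le (M-1)^{\frac{\alpha}{2}-1}$, hence $\Gamma(M-\frac{\alpha}{2})/\Gamma(M-1)\ge (M-1)^{1-\frac{\alpha}{2}}>M^{1-\frac{\alpha}{2}}$; concretely, at $\alpha=3$, $M=2$ (admissible, since $c(k)=1>\frac{\alpha}{2}-1$) the left side is $\Gamma(\frac{1}{2})=\sqrt{\pi}\approx 1.77$ against $M^{1-\frac{\alpha}{2}}\approx 0.71$. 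So your own exact computation shows the full-sum Markov bound is \emph{strictly larger} than the theorem's stated constant, and the displayed $(\lfloor N/k\rfloor)^{1-\frac{\alpha}{2}}$ cannot be reached along this route. The paper sidesteps this by importing the tail bound (\ref{eq:app2int1}) from \cite{Jindal2008a}, whose sum starts at index $c+\lceil\frac{\alpha}{2}\rceil+1$ rather than $c+1$. The honest repair on your route is to keep $\Gamma(M-\frac{\alpha}{2})/\Gamma(M-1)$, or bound it by $(M-2)^{1-\frac{\alpha}{2}}$ via Gautschi; this weakens only the constant and leaves the $\Theta((N/k)^{1-\frac{\alpha}{2}})$ behavior, and hence Corollary \ref{cor:beamnear}, intact, but it does not yield the theorem's literal expression.

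Your $k=1$ lower bound, by contrast, is a valid and genuinely different route. The paper's workaround for the divergent inverse moment is to keep the \emph{two} nearest non-canceled interferers and use $I_{c+1}+I_{c+2}\ge d_{c+2}^{-\alpha}(\mu_{c+1}+\mu_{c+2})$, where the sum of two unit-mean exponentials is Chi-square with $4$ DOF and has inverse mean $1$; Markov then produces the ratio $\Gamma(c+2+\frac{\alpha}{2})/\Gamma(c+2)$ and the constant $\frac{13}{8}$. Your conditioning on $(\gamma_1,d_{c+1})$, using the exact exponential survival function and $e^{-x}\ge 1-x$, gives $P_{out}^{BF}\ge 1-\bbE\{\gamma_1\}\,\bbE\{d_{c+1}^{\alpha}\}/(\beta d^{\alpha})$, i.e., the smaller ratio $\Gamma(c+1+\frac{\alpha}{2})/\Gamma(c+1)$ (constant $\frac{9}{8}$ after the Laforgia step, not $\frac{13}{8}$ as you guessed), which is stronger than, and therefore implies, the theorem's $k=1$ line. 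This is cleaner and tighter than the paper's two-interferer trick. Finally, your explicit justification that $\gamma_k$ is independent of $I_{sum}$, via eigenvalue--eigenvector independence of the Wishart ensemble, addresses a point the paper assumes silently when it writes ``since $s$ and $I_{sum}$ are independent.''
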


\begin{proof}
The proof is identical to Theorem \ref{thm:upboundpoutclosest}, where $\gamma_k$ takes the role of $s$, and $m=N-k$. 
The lower and upper bound are derived in Appendix \ref{lboundpoutclosest} and Appendix \ref{upboundpoutclosest}, respectively.
\end{proof}
An immediate consequence of Theorem \ref{thm:upboundpoutnear} is that it can be used to find the optimal number of transmitted streams $k$ that maximizes
the transmission capacity.

\begin{cor}
\label{cor:beamnear}
Single stream beamforming ($k=1$) together with canceling
the $c(k)=N-1$ nearest interferers using partial ZF,
maximizes the upper and lower bound on the transmission capacity, and provides the best scaling of the transmission capacity with respect to $N$.

\end{cor}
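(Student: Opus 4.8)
The plan is to mirror Corollary~\ref{cor:upboundpoutclosest}: set $P_{out}^{BF}=\epsilon$ in the two bounds of Theorem~\ref{thm:upboundpoutnear}, invert each for $\lambda$, and multiply by $k(1-\epsilon)R$ to obtain explicit upper and lower bounds on $C_{\epsilon}^{BF}=k\lambda_{BF}^{\star}(1-\epsilon)R$ as functions of $k$ and $N$. Inverting the lower bound on $P_{out}^{BF}$ produces an upper bound on $C_{\epsilon}^{BF}$ that grows like $k(\bbE\{\gamma_k\})^{2/\alpha}\lfloor N/k\rfloor$, and inverting the upper bound on $P_{out}^{BF}$ produces a lower bound that grows like $k\big(\epsilon/(k\beta d^{\alpha}\bbE\{1/\gamma_k\}\lfloor N/k\rfloor^{1-\alpha/2})\big)^{2/\alpha}$, up to the $\alpha$-dependent constant $(\tfrac{\alpha}{2}-1)^{-1}$. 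Everything then reduces to the scaling of the two eigenvalue expectations $\bbE\{\gamma_k\}$ and $\bbE\{1/\gamma_k\}$.

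Next I would specialize to $k=1$. Here $c(1)=\lfloor N/1\rfloor-1=N-1$, so all but one SRDOF are spent on cancelation, and the worst (indeed only) transmitted mode is the strongest one, $\gamma_k=\gamma_1$. Lemma~\ref{lem:eigwishart} then applies verbatim: substituting $\bbE\{\gamma_1\}=\Theta(N)$ into the upper bound gives $C_{\epsilon}^{BF}={\cal O}(N^{2/\alpha}\cdot N)={\cal O}(N^{1+2/\alpha})$, and substituting $\bbE\{1/\gamma_1\}=\Theta(1/N)$ into the lower bound, after collecting the resulting $\Theta(N^{\alpha/2})$ factor inside the $(\cdot)^{2/\alpha}$, yields a lower bound of order $\Theta(N)$. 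This establishes the claimed linear scaling of the achievable transmission capacity at $k=1$.

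It remains to show that no other $k$ does better, and this is the step I expect to be the main obstacle, precisely because Lemma~\ref{lem:eigwishart} only controls the largest eigenvalue $\gamma_1$, whereas for $k>1$ the bounds depend on the $k$-th eigenvalue $\gamma_k$. I would argue in two regimes. For $k$ growing with $N$, the $k$-th eigenvalue of $\bH_{00}\bH_{00}^{*}$ falls into (or below) the bulk of the spectrum, so $\bbE\{\gamma_k\}$ drops below $\Theta(N)$ and $\bbE\{1/\gamma_k\}$ rises above $\Theta(1/N)$; simultaneously $c(k)=\lfloor N/k\rfloor-1$ shrinks, so both the signal power and the number of cancelable interferers degrade and each bound falls strictly below its $k=1$ value. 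For constant $k>1$, the ordering $\gamma_1\ge\gamma_2\ge\cdots\ge\gamma_k$ gives the monotonicity $\bbE\{1/\gamma_k\}\ge\bbE\{1/\gamma_1\}$ and $\bbE\{\gamma_k\}\le\bbE\{\gamma_1\}$, which together with the strict decrease $c(k)<c(1)$ shows that the per-$k$ bounds are no larger than at $k=1$: the scaling stays $\Theta(N)$ but the multiplicative constant is maximized at $k=1$, exactly as in the footnote comparison of Corollary~\ref{cor:upboundpoutclosest}, which can be reproduced numerically here.

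Finally I would observe that the integer and concavity bookkeeping is lighter than in the no-CSIT case: the beamforming receiver forces $m=N-k$, so there is no free $m$ to optimize, and once $k=1$ is selected the choice $c(1)=N-1$ is automatic. Hence single-stream beamforming with $N-1$ SRDOF devoted to cancelation simultaneously maximizes both derived bounds and attains the best (linear) scaling of $C_{\epsilon}^{BF}$ in $N$, which is the assertion of the corollary.
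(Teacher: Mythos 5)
Your overall skeleton matches the paper's proof: set $P_{out}^{BF}=\epsilon$ in the two bounds of Theorem~\ref{thm:upboundpoutnear}, invert for $\lambda$, multiply by $k(1-\epsilon)R$, and apply Lemma~\ref{lem:eigwishart} at $k=1$ (where the transmitted mode is $\gamma_1$) to obtain $C_{\epsilon}^{BF}=\Omega(N)$ and $C_{\epsilon}^{BF}={\cal O}\bigl(N^{1+2/\alpha}\bigr)$. That part is sound and is exactly what the paper does.

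The genuine gap is your treatment of $k$ growing with $N$. You assert that for such $k$ the eigenvalue "falls into (or below) the bulk," so that $\bbE\{\gamma_k\}$ drops below $\Theta(N)$ and $\bbE\{1/\gamma_k\}$ rises above $\Theta(1/N)$. This is false over most of the range: for $k=\theta N^t$ with $t<1$ the $k$-th largest eigenvalue still sticks to the upper spectral edge ($\gamma_k/N\rightarrow 4$), and even for $k=\theta N$ with $\theta\in(0,1)$ the bulk (Marchenko--Pastur) quantile gives $\gamma_k=\Theta(N)$; the order only collapses as $k/N\rightarrow 1$. Consequently your conclusion that "each bound falls strictly below its $k=1$ value" also fails in scaling for the lower bound --- at $k=\theta N$ the lower-bound formula still scales as $\Theta(N)$, which is precisely what the paper finds at $t=1$. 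Moreover, none of this bulk-spectrum input is derivable from Lemma~\ref{lem:eigwishart}, which controls only $\gamma_1$, so even where your claims point in the right direction they are unsupported by the paper's tools. The paper sidesteps all of this with the deterministic inequality $\gamma_k\le\gamma_1$ --- the same monotonicity you invoke only for constant $k$ --- applied uniformly in $k$, giving $\bbE\{\gamma_k\}\le c_1N$ and $\bbE\{1/\gamma_k\}\ge c_2/N$ for every $k$; plugging these into the bound formulas for $k=\theta N^t$ shows the upper bound is ${\cal O}\bigl(N^{1+\frac{2}{\alpha}(1-t)}\bigr)$, maximized at $t=0$, while the lower bound can never exceed $\Theta(N)$, which $k=1$ attains. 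Your proof is repaired by extending your constant-$k$ monotonicity argument to all $k$ and performing this explicit scaling computation, rather than appealing to bulk asymptotics.
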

\begin{proof} For $N\rightarrow \infty$, from Lemma \ref{lem:eigwishart}, $\bbE\{\gamma_1\} \le c_1N$, and $\bbE\left\{\frac{1}{\gamma_{1}}\right\} \ge \frac{c_2}{N}$,  where $c_1$ and $c_2$ are constants. Moreover, since $\gamma_k \le \gamma_1, \ \forall \ k$, 
 $\bbE\{\gamma_k\} \le c_1N$, and $\bbE\left\{\frac{1}{\gamma_{k}}\right\} \ge \frac{c_2}{N}, \ \forall \ k$. 
 Thus, with $P_{out}=\epsilon$,
from Theorem \ref{thm:upboundpoutnear},
\begin{eqnarray*}
C_{\epsilon}^{BF} &\le & \left\{\begin{array}{lc}
\frac{(1-\epsilon)^{1-\frac{2}{\alpha}}kR}{\pi}\left(\frac{c_1N}{(k-1)d^{\alpha}\beta}\right)^{\frac{2}{\alpha}}
\left(\left\lfloor\frac{N}{k}\right\rfloor+\frac{3}{8}+\frac{\alpha}{4}\right), & \text{for} \ k>1, \\
\frac{(1-\epsilon)^{1-\frac{2}{\alpha}}R}{\pi}\left(\frac{c_1N}{d^{\alpha}\beta}\right)^{\frac{2}{\alpha}}\left(N+\frac{5}{8}+\frac{\alpha}{4}\right),  & \text{for} \ k=1,
\end{array}\right. \\
C_{\epsilon}^{BF} &\ge &
\frac{(1-\epsilon)R k^{1-\frac{2}{\alpha}}}{\pi}
\left(\frac{\epsilon}{ \frac{d^{\alpha} \beta c_2}{N}
\left(\left(
\frac{\alpha}{2}-1
\right)\left(\left\lfloor
\frac{N}{k}\right
\rfloor\right)^{1-\frac{\alpha}{2}} \right)}\right)^{\frac{2}{\alpha}}
.
\end{eqnarray*}
Therefore, from the upper and lower bound, using a single data stream $k=1$,
$ C_{\epsilon}^{BF} = \Omega\left(N\right)$, and $C_{\epsilon}^{BF}= {\cal
O}\left(N^{1+\frac{2}{\alpha}}\right)$. For any other constant value of 
$k$ that is independent of $N$ also leads to the same scaling law as $k=1$, however, both the lower and upper bound on the transmission capacity are decreasing functions of $k$. The only option that can
change the scaling behavior of the upper and lower bound is when
$k$ is a function of $N$, $k=\theta N^t, \ \theta \in (0,1], \ t\in [0,1]$. Using
$k=\theta N^t$, for $t=1$, $C_{\epsilon}^{BF} = \Theta(N)$, else, $C_{\epsilon}^{BF} = \Omega(N)$, and $C_{\epsilon}^{BF} = {\cal
O}\left(N^{1+\frac{2}{\alpha}(1-t)}\right)$ which is maximized at $t=0$. Thus, $k=1$ provides the best scaling of the transmission capacity with respect to $N$.
\end{proof}

{\it Discussion:} In this subsection we showed that single stream beamforming ($k=1$) together with canceling the $N-1$ nearest interferers, maximizes the upper and lower bound on the transmission capacity. 
This is in contrast to the findings of Section \ref{sec:csirclosest} where without beamforming the use of $k=1$ and 
$m=\theta N$ SRDOF for canceling the nearest interferers is shown to be optimal. This difference can be explained by noting the fact that with beamforming, for $k=1$, and $N-1$ SRDOF for interference cancelation,  the expected value of the signal power is of order $N$. In comparison, without beamforming, the expected value of the signal power is of order $N-m$, if $m$ SRDOF are used for interference cancelation. Therefore, with no beamforming, using $m=N-1$,  the expected signal power is independent of $N$, and $m=\theta N$ is needed for the expected signal power to grow linearly with $N$. 
Also note that using interference cancelation with single stream beamforming, the transmission capacity scales linearly with $N$ in comparison to the sublinear scaling without interference cancelation \cite{Hunter2008}. Thus, similar to the case of spatial multiplexing without CSIT, interference cancelation is critical even when CSIT is available in an ad-hoc network.

For single stream beamforming, each transmitter requires only the knowledge of the eigenvector corresponding the strongest eigenvalue of the channel between the transmitter and its intended receiver. Thus, the feedback requirements with the optimal strategy  are minimal, and the optimal strategy can be implemented in practice fairly easily.


\section{Simulations}\label{sec:sim} In this section we present numerical simulations to show how the results of this paper (order wise in $N$) apply with finite 
values of number of antennas $N$.
For all the transmission capacity simulations in the paper, the simulated
ad hoc network lies on a two-dimensional disk and contains a number of transmitter-receiver pairs, which
follows the Poisson distribution with the mean equal to $200$. The locations of the nodes are uniformly
distributed on the disk and the disk area is adjusted according to the node density with the typical receiver 
placed at the center of the disk. We set the distance between the typical transmitter and receiver as $d = 1$m or $d=5$m (specified for each plot), the SIR threshold $\beta=1$, and the path-loss exponent $\alpha=3$.

{\bf Without CSIT and canceling the nearest interferers:}
To illustrate the scaling behavior of the transmission capacity with interference cancelation without CSIT, we plot the simulated
transmission capacity for $k=1, m=(1-2/\alpha)N$ (partial ZF), $k=1, m=0$ (MRC), $k=1, m=N-1$, $k=N/2, m=N/2$ (multiple data streams with interference cancelation), and  $k=N/2, m=0$ (multiple data streams with no interference cancelation) in
Fig. \ref{fig:nocsitscalingclosest}. As expected, $k=1, m=(1-2/\alpha)N$ achieves the best performance in terms  of transmission capacity  with increasing $N$, in contrast to all other cases. An important observation from Fig. \ref{fig:nocsitscalingclosest} is that if the number of data streams is scaled linearly with $N$, no interference cancelation leads to better performance compared to performing interference cancelation at the receiver. 
The justification for this result is that when the number of data streams is scaled linearly with $N$, 
employing interference cancelation can remove only  a few interferers, but decreases the signal power significantly.  
For example if $k=N/2$, and $m=N/2$ SRDOF are used for interference cancelation, only one interferer is canceled, and the signal power is Chi-square distributed with $2$ DOF. In comparison with no interference cancelation, the signal power is 
 Chi-square distributed with $2(N/2-1)$ DOF and no interferer is canceled. Since there are a large number of interferers, canceling one  interferer does not significantly increase the SIR, however, the decreased signal power severely affects the performance. 
To illustrate the behavior of the transmission capacity as a function
of the number of data streams $k$, we plot the simulated transmission capacity for $N=10$ by varying $k$ with 
$m = \min \left\{ \left(1-\frac{2}{\alpha}\right)N, \left\lfloor\frac{N-k}{k}\rfloor\right)   \right\}$ in Fig. \ref{fig:csirmsirvsk}. Clearly, the transmission capacity decreases with $k$ as predicted by the derived upper and the lower bounds.

{\bf Without CSIT and canceling using the CMSIR algorithm:}
To illustrate the scaling behavior of the transmission capacity with the CMSIR algorithm as a function
of $N$, we plot the simulated transmission capacity for $k=1, m=(1-\frac{2}{\alpha})N$ in Fig. \ref{fig:csirmsirvsN}. We also illustrate the  performance gain of the CMSIR algorithm in comparison to  canceling the nearest interferers. To illustrate the behavior of the transmission capacity as a function
of the number of data streams $k$, we plot the simulated transmission capacity for $N=10$ by varying $k$ in Fig. \ref{fig:csirmsirvsk}. Clearly, the transmission capacity decreases with $k$ as predicted by the derived upper and the lower bounds.

{\bf With CSIT and canceling the nearest interferers:}
To understand the exact behavior of the transmission capacity with respect to the number of transmitted data streams $k$ when each transmitter uses multi-mode beamforming,  we plot the simulated transmission capacity as a function of $k$ in
Fig. \ref{fig:csitvarkclosest}. From Fig. \ref{fig:csitvarkclosest}
we can see that the transmission capacity is maximized for $k=1$ as
suggested by the derived upper and lower bound, and we conclude that $k=1$ is
optimal for maximizing the transmission capacity when transmitter uses multi-mode beamforming and the receiver cancels the nearest interferers. To illustrate the scaling behavior of the transmission capacity with $N$, we plot
the derived lower and upper bound, and the simulated transmission capacity in Fig.
\ref{fig:csitscalingclosest}, for $k=1$, $d=5$ m, $\alpha =4$ and
$\epsilon = 0.1$ with increasing $N$. 
\section{Conclusions}
\label{sec:conc} In this paper we showed that interference cancelation is critical in ad-hoc networks for maximizing the scaling of the transmission capacity with respect to the number of antennas. A main result of this paper is that  most of the gain from multiple antennas is obtained by using them for  interference cancelation at the receiver. For the case of spatial multiplexing at the transmitter without CSIT, we showed that a single transmit antenna is sufficient at the transmitter, and the transmission capacity increases at least  linearly with the number of receive antennas, by using an appropriate fraction of SRDOF for interference cancelation and leaving the rest of SRDOF for decoding the signal of interest.
With beamforming, we showed that a single stream beamforming is optimal at the transmitter, and
provides with  linear scaling of the transmission capacity with $N$, if all but one of the SRDOF are dedicated for interference cancelation at the receiver. Comparing the beamforming and no-beamforming cases in simulation, we found  that there is not much gain by using beamforming in an ad-hoc network in terms of scaling 
of the transmission capacity with respect to the number of antennas. This is explained by the fact that there is only a constant increase in the signal power (independent of the number of antennas) with beamforming in comparison to the no beamforming case.
\begin{figure}
\centering
\includegraphics[width=5.5in]{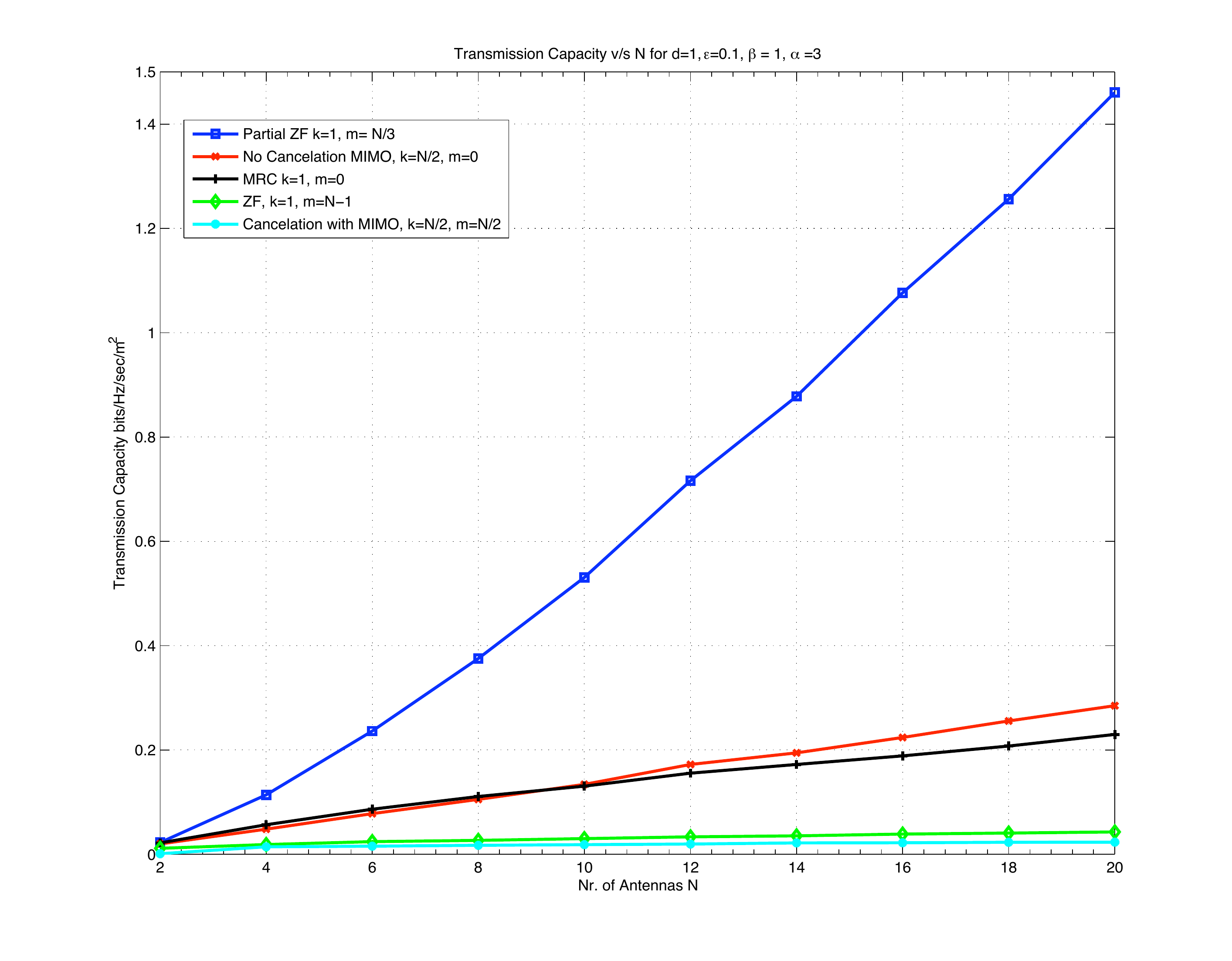}
\caption{Transmission capacity versus $N$ with no beamforming and canceling the nearest interferers for $k=1$, $d=1$ m, $\beta=1$, $\alpha =3$, $\epsilon =0.1$}
\label{fig:nocsitscalingclosest}
\end{figure}

\begin{figure}
\centering
\includegraphics[width=4in]{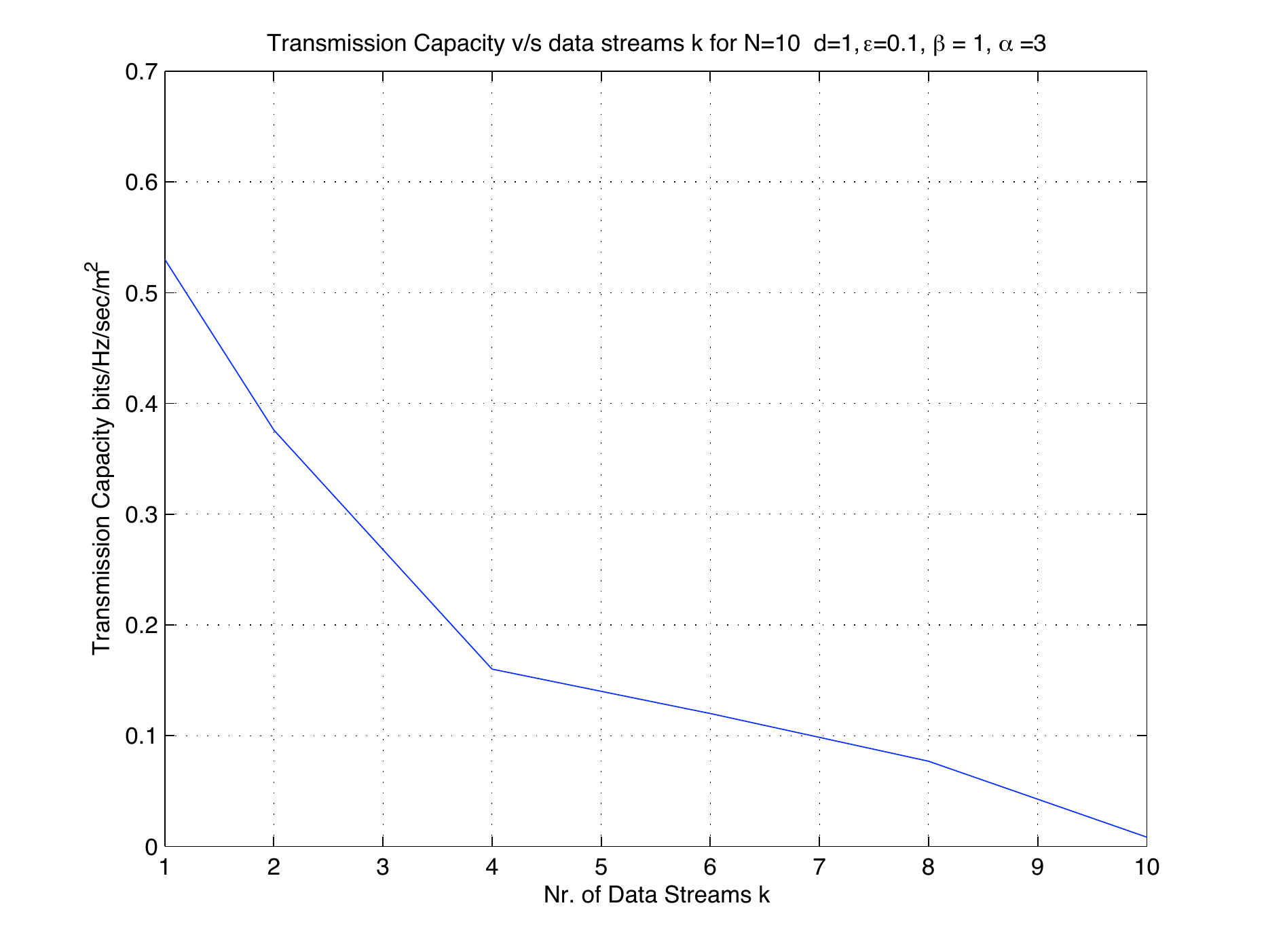}
\caption{Transmission capacity versus the number of transmitted data streams $k$ for $N=10$ with no beamforming and canceling the nearest interferers with   $d=1$ m, $\beta=1$, $\alpha =3$, $\epsilon =0.1$}
\label{fig:csirmsirvsk}
\end{figure}

\begin{figure}
\centering
\includegraphics[width=5.5in]{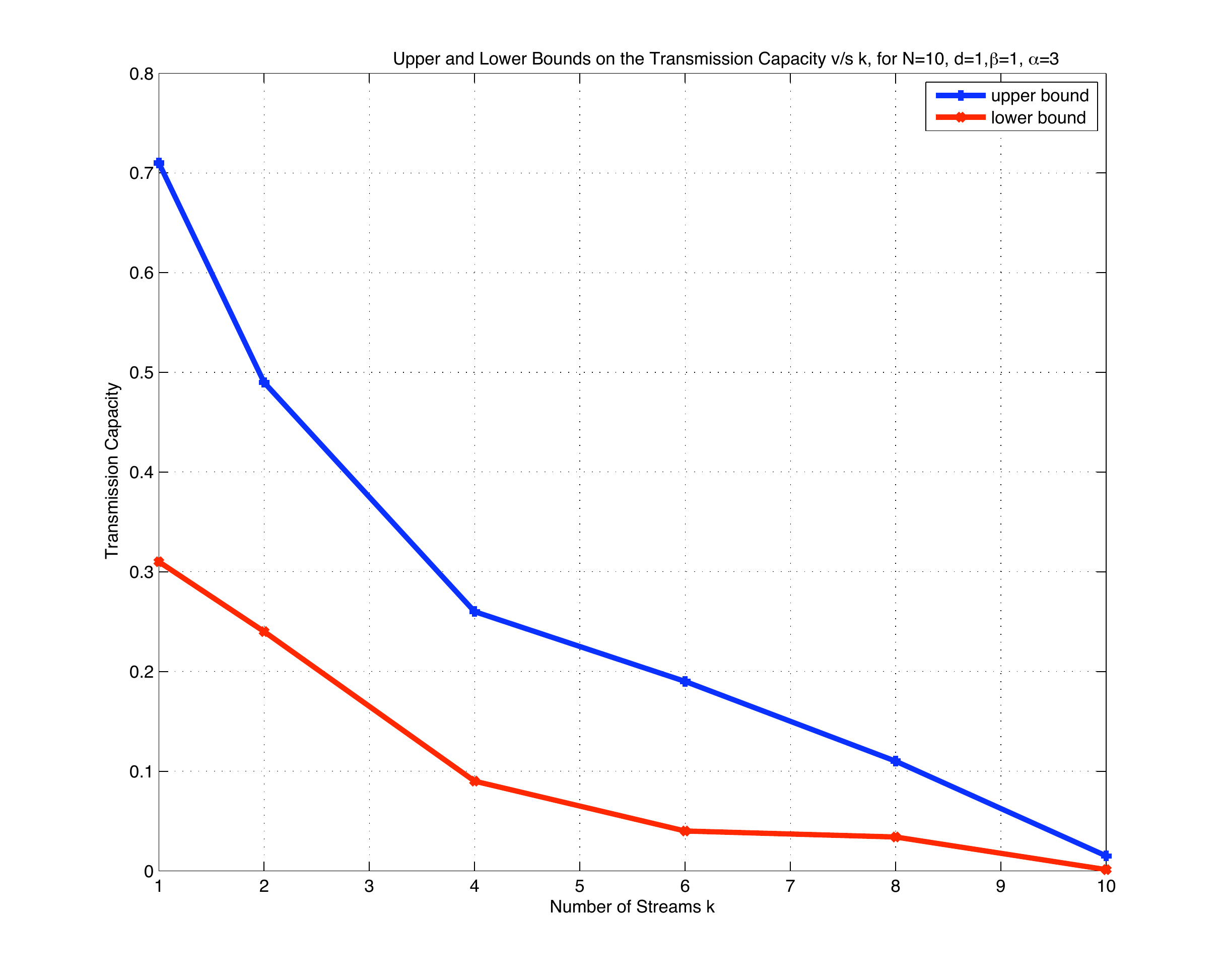}
\caption{Transmission capacity upper and lower bound versus $k$ with no beamforming and canceling the nearest interferers for $N=10$, $d=1$ m, $\beta=1$, $\alpha =3$, $\epsilon =0.1$}
\label{fig:lbubvsk}
\end{figure}

\begin{figure}
\centering
\includegraphics[width=5.5in]{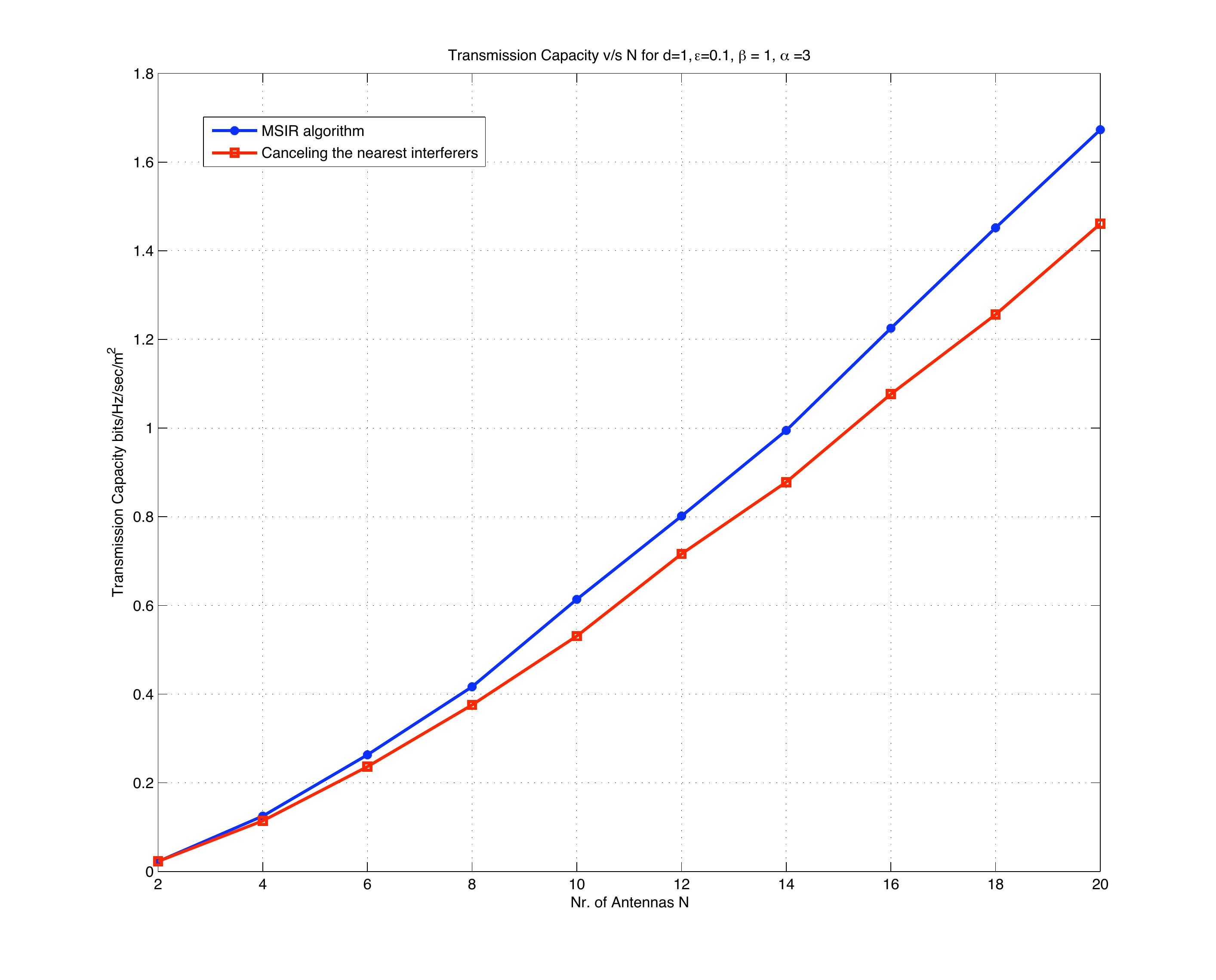}
\caption{Transmission capacity versus $N$ with no beamforming and using the CMSIR algorithm for interference cancelation  for $k=1$, $d=1$ m, $\beta =1$, $\alpha =3$, $\epsilon =0.1$}
\label{fig:csirmsirvsN}
\end{figure}

\begin{figure}
\centering
\includegraphics[width=4in]{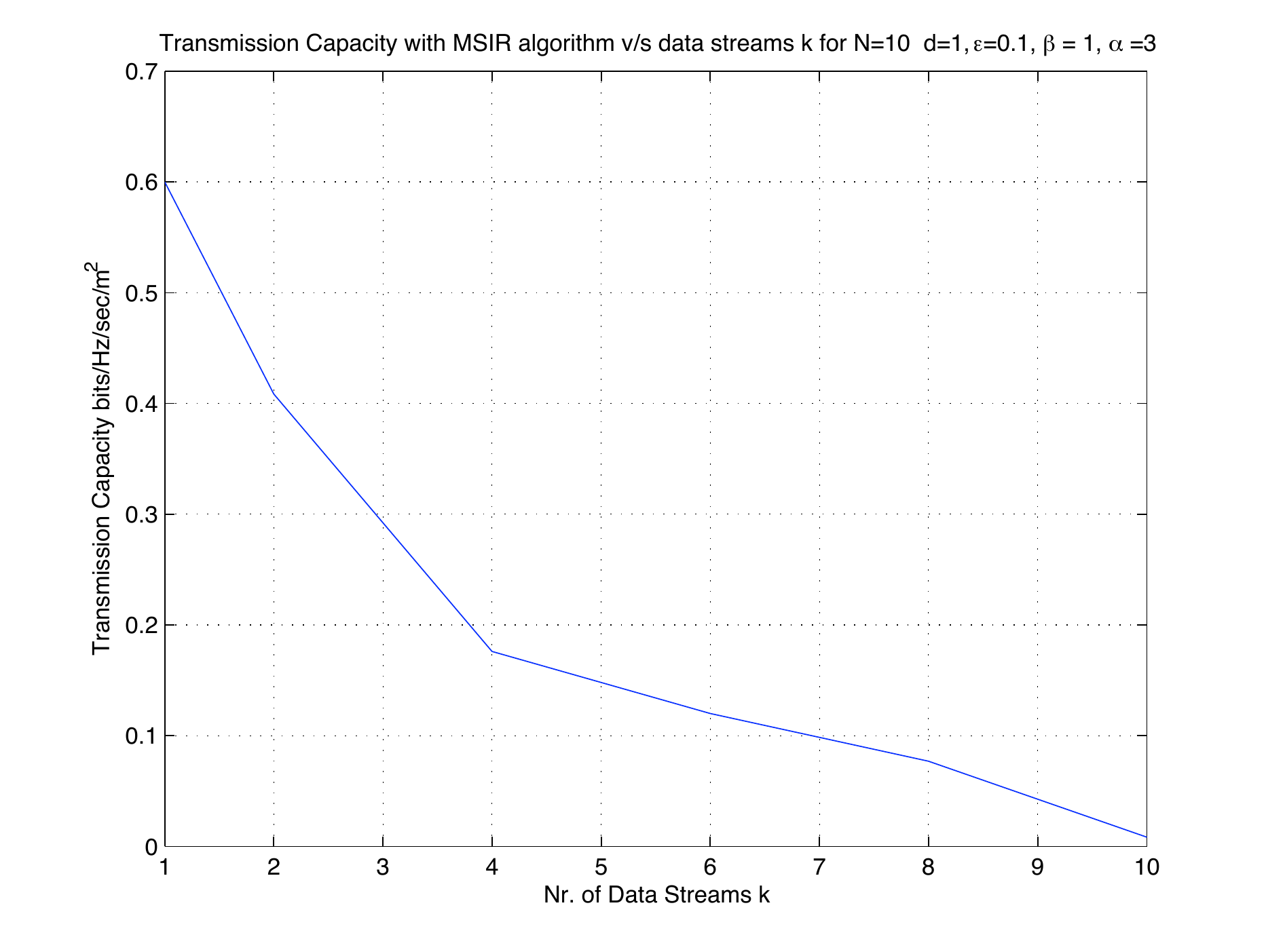}
\caption{Transmission capacity versus the number of transmitted data streams $k$ for $N=10$ with no beamforming and using the CMSIR algorithm for interference cancelation  and  $k=1$, $d=1$ m, $\beta =1$, $\alpha =3$, $\epsilon =0.1$}
\label{fig:csirmsirvsk}
\end{figure}

\begin{figure}
\centering
\includegraphics[width=4in]{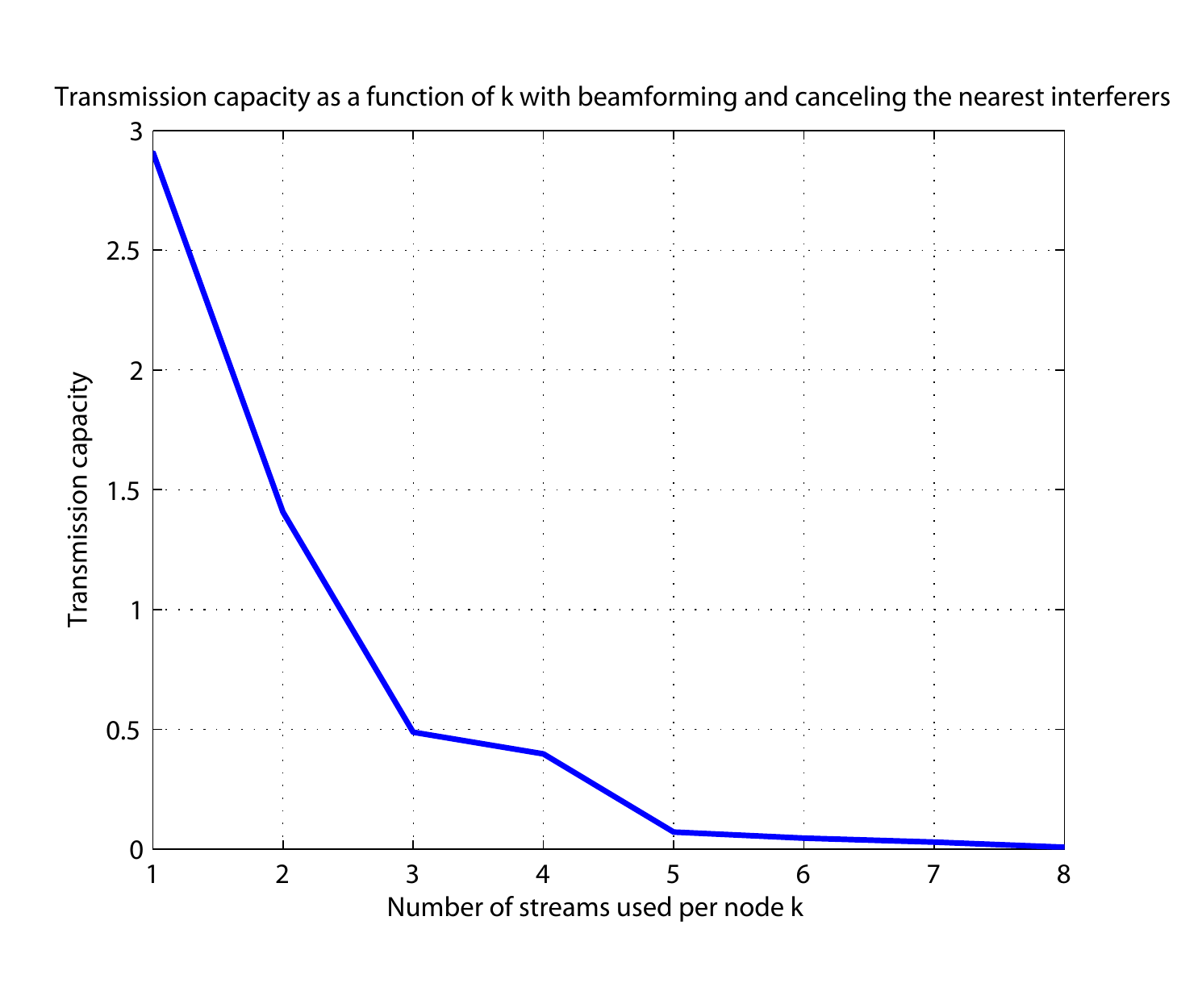}
\caption{Transmission capacity versus $k$ with beamforming and canceling the nearest interferers for $k=1$, $d=1$ m, $\beta =1$, $\alpha =4$, $\epsilon =0.1$, $N=8$.}
\label{fig:csitvarkclosest}
\end{figure}

\begin{figure}
\centering
\includegraphics[width=4in]{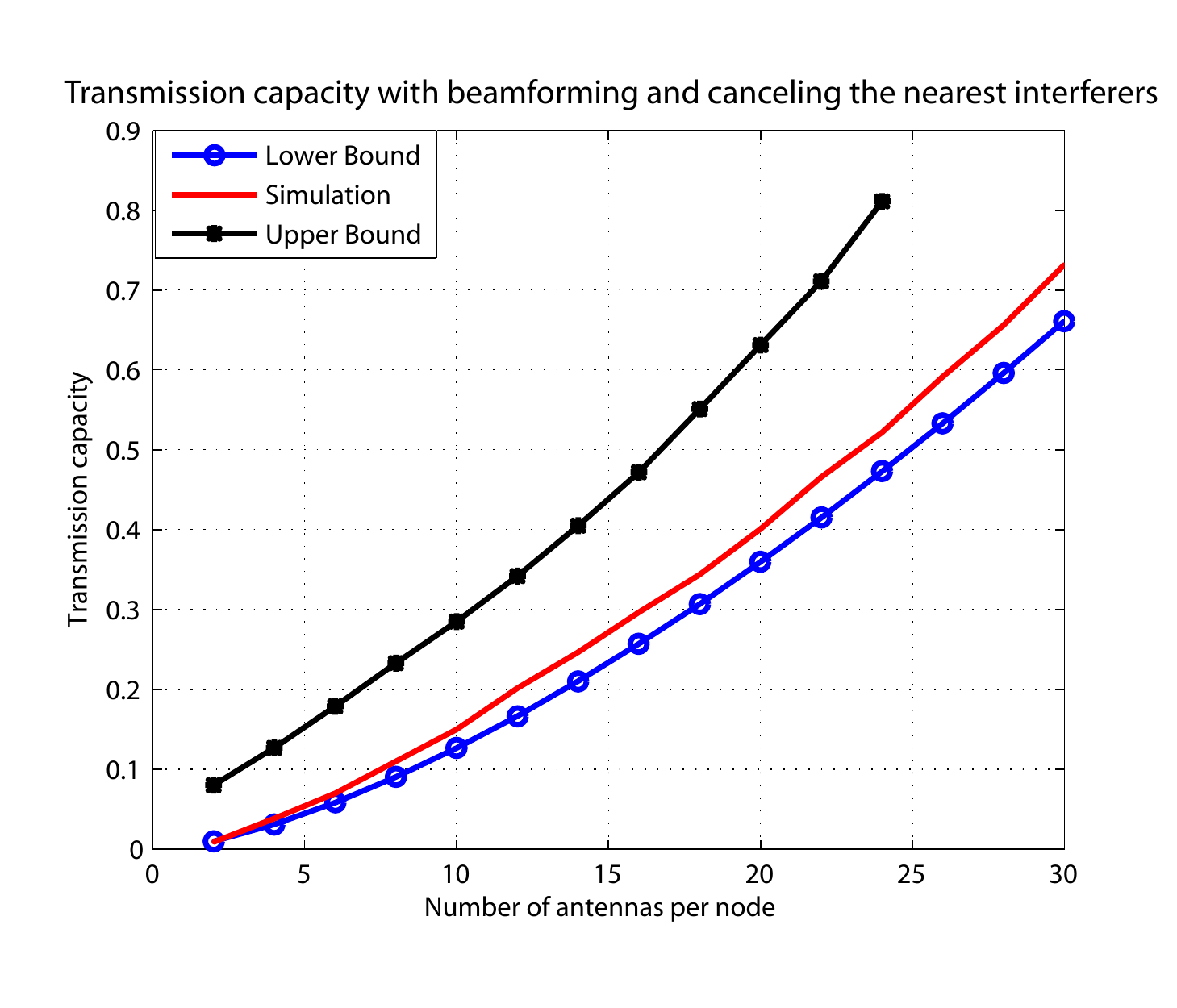}
\caption{Transmission capacity versus $N$ with beamforming and canceling the nearest interferers for $k=1$, $d=5$ m, $\beta =1$, $\alpha =4$, $\epsilon =0.1$}
\label{fig:csitscalingclosest}
\end{figure}

\section{Acknowledgements}
We would like to thank Dr. Marios Kountouris and Dr. Nihar Jindal for useful comments and discussions.
\appendices
\section{Lower bound on the outage probability while canceling the nearest
interferers}
\label{lboundpoutclosest}
Let $s$ be the power of the signal of interest.
For Section \ref{sec:csit}, $s=\gamma_k$.
To lower bound the outage probability we consider the
interference contribution from only the nearest non-canceled interferer, that
is the $c(k,m)+1^{st}$ interferer, as follows.
Recall that
\begin{eqnarray}\nonumber
1-P_{out} &=& P\left(\frac{d^{-\alpha}  s}{I_{sum}}>\beta\right), \\\nonumber
&\le& P\left(\frac{d^{-\alpha}  s}{I_{c(k,m)+1}}>\beta\right), \ \text{since} \ I_{c(k,m)+1}
\le I_{sum}, \\\nonumber
&\le & \frac{\bbE\left\{\frac{s}{I_{c(k,m)+1}}\right\}}{d^{\alpha} \beta}, \ \ \  \ \text{ from the Markov inequality},\\ \label{eq:app1intm1}
&=& \frac{\bbE\{s\}}{d^{\alpha}\beta}\bbE\left\{\frac{1}{I_{c(k,m)+1}}\right\},  \ \text{since} \ s \  \text{and} \  I_{c(k,m)+1}  \ \text{are independent}.
\end{eqnarray}
Recall that $I_{c(k,m)+1} = d_{c(k,m)+1}^{-\alpha}  \rho_{c(k,m)+1}$, where
$\rho_{c(k,m)+1}$ and $d_{c(k,m)+1}$ are independent of each other. Moreover, $\rho_{c(k,m)+1}$
is Chi-square distributed
with $2k$ DOF, and  $\pi \lambda d_{c(k,m)+1}^2$ is Chi-square distributed
with $2(c(k,m)+1)$ DOF \cite{Jindal2008a}. 
Therefore from (\ref{eq:app1intm1})
\begin{eqnarray*}
1-P_{out}
&\le& \frac{\bbE\{s\}}{d^{\alpha}\beta}\bbE\left\{d^{\alpha}_{c(k,m)+1}\right\}\bbE\left\{\frac{1}{\rho_{c(k,m)+1}}\right\}.
\end{eqnarray*}
Since $\rho_{c(k,m)+1}$
is Chi-square distributed
with $2k$ DOF, $\bbE\left\{\frac{1}{\rho_{c(k,m)+1}}\right\} = \frac{1}{k-1}$ for $k>1$, and $\bbE\left\{\frac{1}{\rho_{c(k,m)+1}}\right\} = \infty$ for $k=1$. Thus, this lower bounding technique is useful only for $k>1$. For $k=1$ a lower bound is obtained separately.

{\bf Case 1\ ($k>1$):}
Using change of variables $\bbE\left\{d^{\alpha}_{c(k,m)+1}\right\} = \frac{1}{\left(\pi\lambda\right)^{\frac{\alpha}{2}}} \int_{0}^{\infty} \frac{x^{c(k,m)+\frac{\alpha}{2}}e^{-x}}{c(k,m)!}dx = \frac{
\Gamma\left(c(k,m)+1+\frac{\alpha}{2}\right)}{\Gamma\left(c(k,m)+1\right)}$, and $\bbE\left\{\frac{1}{\rho_{c(k,m)+1}}\right\} = \left(\frac{1}{k-1}\right)$.
Thus,
\begin{eqnarray*}
1-P_{out} 
&\le& \frac{\bbE\left\{ s\right\}}{d^{\alpha}\beta}
\left( \frac{1}{\left(\pi\lambda\right)^{\frac{\alpha}{2}}}\frac{
\Gamma\left(c(k,m)+1+\frac{\alpha}{2}\right)}{\Gamma\left(c(k,m)+1\right)}\right)
\left(\frac{1}{k-1}\right).
\end{eqnarray*}

{\bf Case 2\ ($k=1$):} To obtain a lower bound on the outage probability for $k=1$, we consider the interference contribution of the two nearest non-canceled interferers $I_{c(k,m)+1}$ and $I_{c(k,m)+2}$. Hence
\begin{eqnarray*}
1-P_{out} &=& P\left( \frac{ d^{-\alpha}  s}{I_{sum}}>\beta\right), \\
&\le& P\left(\frac{d^{-\alpha}  s}{I_{c(k,m)+1} + I_{c(k,m)+2}}>\beta\right), \ \text{since} \ I_{c(k,m)+1}+ I_{c(k,m)+2}
\le I_{sum}.
\end{eqnarray*}

By definition $d_{c(k,m)+1} \le d_{c(k,m)+2}$, and hence
$I_{c(k,m)+1} + I_{c(k,m)+2} =d_{c(k,m)+1}^{-\alpha} \rho_{c(k,m)+1} + d_{c(k,m)+2}^{-\alpha} \rho_{c(k,m)+2} 
\ge d_{c(k,m)+2}^{-\alpha} (\rho_{c(k,m)+1}  + \rho_{c(k,m)+2} ) $. 
Therefore
\begin{eqnarray*}
1-P_{out} 
&\le & P\left(\frac{d^{-\alpha}s}{ d_{c(k,m)+2}^{-\alpha} (\rho_{c(k,m)+1}  + \rho_{c(k,m)+2} )}>\beta\right). \\
\end{eqnarray*}
Note that  $\rho_{c(k,m)+1}$ and $\rho_{c(k,m)+2}$
are i.i.d. with Chi-square distribution with $2$ DOF for
$k=1$,  therefore $\rho_{c(k,m)+1} + \rho_{c(k,m)+2}$ is Chi-square distributed with $4$
DOF.  Hence, similar to the case of $k>1$,  
\begin{eqnarray*}
1-P_{out} &\le& \frac{\bbE\left\{ s\right\}}{d^{\alpha}\beta}
\left( \frac{1}{\left(\pi\lambda\right)^{\frac{\alpha}{2}}}\int_{0}^{\infty} \frac{x^{c(k,m)+1 + \frac{\alpha}{2}}e^{-x}}{(c(k,m)+1)!}dx\right) \
\int_{0}^{\infty}e^{-\rho}d\rho, \\
&=& \frac{\bbE\left\{ s\right\}}{d^{\alpha}\beta}
\left(\frac{1}{\left(\pi\lambda\right)^{\frac{\alpha}{2}}}\frac{
\Gamma\left(c(k,m)+2+\frac{\alpha}{2}\right)}{\Gamma\left(c(k,m)+2\right)}\right).
\end{eqnarray*}

From  \cite{Laforgia},
\begin{eqnarray*}
\frac{
\Gamma\left(c(k,m)+\frac{\alpha}{2}\right)}{\Gamma\left(c(k,m)+1\right)} \le
\left(c(k,m)+\frac{1}{8}+\frac{\alpha}{4}\right)^{\frac{\alpha}{2}-1},\end{eqnarray*}
for $c(k,m), \frac{\alpha}{2} >0$.
Therefore, 
\begin{eqnarray*}
1-P_{out} &\le&
\left\{
\begin{array}{cc}
\frac{\bbE\left\{ s\right\}}{(k-1) d^{\alpha}\beta\left(\pi\lambda\right)^{\frac{\alpha}{2}}}
 \left(c(k,m)+\frac{5}{8}+\frac{\alpha}{4}\right)^{\frac{\alpha}{2}}, & k>1, \\
\frac{\bbE\left\{ s\right\}}{d^{\alpha}\beta\left(\pi\lambda\right)^{\frac{\alpha}{2}}}
\left(c(k,m)+\frac{13}{8}+\frac{\alpha}{4}\right)^{\frac{\alpha}{2}},  & k=1.
\end{array}\right.
\end{eqnarray*}

\section{Upper bound on outage probability while canceling the nearest interferers}
\label{upboundpoutclosest}
Let $s$ be the power of the signal of interest. For Sections \ref{sec:csirclosest}, and \ref{sec:csirmsir}, $s$ is Chi-square distributed with $2(N-k-m+1)$ DOF, while 
for Section \ref{sec:csit}, $s=\gamma_k$. 
We obtain upper bounds for the two cases: $N>k+m$, and $N=k+m$, separately  as follows.

{\bf Case 1:  $N>k+m$.}
By definition
\begin{eqnarray}\nonumber
P_{out} &=& P\left(\frac{I_{sum}}{s} \ge \frac{1}{d^{\alpha} \beta}\right),\\\nonumber
&\le &  d^{\alpha} \beta\ \bbE\left\{\frac{I_{sum}}{s}\right\},  \  \text{ from the Markov inequality},\\ \label{eq:app2pout}
 &= &d^{\alpha} \beta\ \bbE\left\{I_{sum}\right\}\bbE\left\{\frac{1}{s}\right\}, \ \text{since} \ s \ \text{and} \  I_{sum}\  \text{are independent}.
\end{eqnarray}

Recall that $I_{sum} = \sum_{i=c(k,m)+1}^{\infty}d_i^{-\alpha} \rho_i$, where $\rho_i$ is a Chi-square random
variable with $2k$ DOF $\forall \ i$. Hence  $\bbE\{I_{sum} \} = \sum_{i=c(k,m)+1}^{\infty}\bbE \left\{ d_i^{-\alpha}\right\}\bbE\{\rho_i\} = k\sum_{i=c(k,m)+1}^{\infty}\bbE \left\{ d_i^{-\alpha}\right\}$. From \cite{Jindal2008a}, 
$\pi \lambda d_{i}^2$ is Chi-square distributed
with $2i$ DOF. Hence $\bbE\{d_{i}^{-\alpha}\}  = (\pi \lambda)^{\frac{\alpha}{2}}\frac{\Gamma(i-\frac{\alpha}{2})}{\Gamma(i)}$, which is finite only if $i> \frac{\alpha}{2}$. This is the restriction because of which we needed the condition that the number of interferers canceled $c(k,m)$ or $c(k)$ is greater than $\frac{\alpha}{2}-1$.  Therefore, as derived in  \cite{Jindal2008a},
\begin{eqnarray}
\label{eq:app2int1}
\bbE \left\{\sum_{i=c(k,m)+\left \lceil \frac{\alpha}{2} \right\rceil+1}^{\infty}d_i^{-\alpha}\right\} &\le&
(\pi\lambda)^{\frac{\alpha}{2}}\left(\frac{\alpha}{2}-1\right)^{-1}\left(c(k,m)\right)^{1-\frac{\alpha}{2}}.
\end{eqnarray}

Hence, 
\begin{equation}
\label{intupboundclosecancl}
\bbE \{I_{sum}\} \le (\pi\lambda)^{\frac{\alpha}{2}}k\left(\frac{\alpha}{2}-1\right)^{-1}\left(c(k,m)+1\right)^{1-\frac{\alpha}{2}},
\end{equation}
and from (\ref{eq:app2pout}),
\begin{eqnarray*}
P\left(I_{sum} \ge \frac{d^{-\alpha}  s}{\beta}\right)&\le& (\pi\lambda)^{\frac{\alpha}{2}}k\beta d^{\alpha} 
\left(
\left(\frac{\alpha}{2}-1\right)^{-1}
\left(
c(k,m)+1\right)^{1-\frac{\alpha}{2}}
 \right) \bbE\left\{\frac{1}{s}\right\}.
\end{eqnarray*}
For Sections \ref{sec:csirclosest}, and \ref{sec:csirmsir}, $s$ is Chi-square distributed with $2(N-k-m+1)$ DOF, and 
hence $\bbE\left\{\frac{1}{s}\right\} = \frac{1}{N-k-m}$, while for Section \ref{sec:csit} signal power $s =\gamma_k$. Clearly, for 
$N=k+m$, this upper bound in meaningless for  Sections \ref{sec:csirclosest}, and \ref{sec:csirmsir}, and we obtain an upper bound for the case $N=k+m$ separately as follows.

{\bf Case 2:  $N=k+m$.} Recall that the signal power $s$ is Chi-square distributed with $2(N-k-m+1)$ DOF. 
Thus, with $N=k+m$, $s$ is exponentially distributed with parameter $1$. Note that this extra step is not required for Section \ref{sec:csit}, where signal power $s =\gamma_k$. Therefore  
\begin{eqnarray}\nonumber
P_{out} &=& P\left(s \le d^{\alpha} \beta I_{sum} \right),\\\nonumber
&=& \bbE\left\{ 1- e^{-\left(d^{\alpha} \beta I_{sum}\right)}\right\},  \  \text{ since $s$ is exponentially distributed},\\ \nonumber
 &\le & 1- e^{-\left(d^{\alpha} \beta \bbE\left\{ I_{sum}\right\}\right)}, \ \text{using} \  \text{Jensen's Inequality, since $e^{(-x)}$ is  convex}.
\end{eqnarray}
Hence we get the following upper bound on the outage probability
 \begin{eqnarray*}
P_{out}&\le&
\left\{
\begin{array}{ll}
1- e^{-\left(d^{\alpha} \beta \bbE\left\{ I_{sum}\right\}\right)}
, \ \ \ \text{for} \ \ k+m=N, &  \\
\frac{d^{\alpha} \beta\bbE\left\{  I_{sum}\right\}}{N-m-k}
,  \ \ \ \ \  \ \ \ \ \ \ \ \ \ \ \text{otherwise},&
\end{array}
\right.
\end{eqnarray*}
where $\bbE\left\{ I_{sum}\right\}$ is given by (\ref{intupboundclosecancl}).

\section{Lower bound on the outage probability with the CMSIR algorithm}
\label{app:mic}
With the CMSIR algorithm, the SIR is 
\[\SIR^{\text{CMSIR}} \bydef \max_{ {\cal S} \subset \Phi\backslash \{T_0\}, \  |{\cal S}| = c(k,m)} \frac{ d^{-\alpha} s  }{\sum_{n:T_n\in \Phi \backslash \{T_0, \cal S\}}  d_n^{-\alpha} \rho_n},\] where $s$ is Chi-square distributed with $2(N-m-k+1)$ DOF, and $\rho_n$ is Chi-square distributed with $2k$ DOF $\forall n \in \Phi \backslash \{T_0, \cal S\}$. The resulting 
outage probability is 
\begin{eqnarray*}P^{\text{CMSIR}}_{out} &=& P\left( \SIR^{\text{CMSIR}}<\beta\right).
\end{eqnarray*}

 Let the  optimal set of interferers to be canceled be ${\cal S}^{\star}$.
Note that the CMSIR algorithm cancels only $c(k,m)$ interferers, therefore at least one out of the $c(k,m)+1$ nearest interferers is not canceled. Assume that the $r^{th}, \ r=1,2,\ldots, c(k,m)+1$ nearest interferer is not canceled, i.e.  
$T_r \in \Phi \backslash \{T_0, \cal S^\star\}$. 
Then the interference $\sum_{n: \ T_n \in \Phi \backslash \{T_0, \cal S^{\star}\}}  d_n^{-\alpha} \rho_n \ge I_r$. 
Since $d_n^{-\alpha}$ is a decreasing function in $d_n$, 
$d_r^{-\alpha} \ge d_{c(k,m)+1}^{-\alpha}, \ r=1,2,\ldots, c(k,m)+1$, and $I_r= d_r^{-\alpha}\rho_r  \ge 
d_{c(k,m)+1}^{-\alpha}\rho_r, \ r=1,2,\ldots, c(k,m)+1$. Moreover, since all the $\rho_n$'s, $n\in \{\Phi \backslash {\cal S}^{\star}\}$'s are identically distributed as Chi-square with $2$ DOF, and $\rho_n$ and $d_n^{-\alpha}$ are independent, 
 $d_{c(k,m)+1}^{-\alpha}\rho_r $ and $I_{c(k,m)+1} = d_{c(k,m)+1}^{-\alpha}\rho_{c(k,m)+1}$ have the same distribution, where $I_{c(k,m)+1}$ is the interference received from the $c(k,m)+1^{th}$ nearest interferer.
 Hence 
\begin{eqnarray*}
P_{out}^{\text{CMSIR}} &\ge & 
P\left( \frac{ d^{-\alpha}  s}
{I_r}
 < \beta
\right),
\\ 
&\ge & 
P\left( 
\frac{ 
d^{-\alpha}  s
}
{
d_{c(k,m)+1}^{-\alpha}\rho_r 
}
 < \beta
\right),\\
&=&P\left( 
\frac{ d^{-\alpha}  s}
{I_{c(k,m)+1}}< \beta\right),\end{eqnarray*}
since similar to $d_{c(k,m)+1}^{-\alpha} \rho_r $, $I_{c(k,m)+1}$is also independent of $s$. Hence we have shown that the outage probability with the CMSIR algorithm is lower bounded by the outage probability while considering only the interference contribution from the $c(k,m)+1^{th}$ nearest interferer. Recall that we have already derived a lower bound on the the outage probability while  considering only the interference contribution from the $c(k,m)+1^{th}$ interferer in Appendix \ref{lboundpoutclosest}, from which we get the required lower bound for the CMSIR algorithm.

\bibliographystyle{../../IEEEtran}
\bibliography{../../IEEEabrv,../../Research}

\end{document}